\title{From No-Regret to Strategically Robust Learning in Repeated Auctions}
\author{Junyao Zhao\thanks{IRIF, CNRS, Université Paris Cit\'e. Email: \texttt{junyao@irif.fr}.
}}
\date{}
\newtheorem{theorem}{Theorem}[section]
\newtheorem{lemma}[theorem]{Lemma}
\newtheorem{corollary}[theorem]{Corollary}
\newtheorem{proposition}[theorem]{Proposition}
\newtheorem{claim}[theorem]{Claim}
\newtheorem{definition}[theorem]{Definition}
\newtheorem{example}[theorem]{Example}
\newtheorem{assumption}[theorem]{Assumption}
\def \A {\mathcal{A}}
\def \D {\mathcal{D}}
\def \M {\mathcal{M}}
\def \R {\mathbb{R}}
\def \S {\mathcal{S}}
\def \X {\mathcal{X}}
\def \Z {\mathbb{Z}}
\def \b {\mathbf{b}}
\def \e {\mathbf{e}}
\def \p {\mathbf{p}}
\def \r {\mathbf{r}}
\def \v {\mathbf{v}}
\def \x {\mathbf{x}}
\def \MWU {\textnormal{\sffamily MWU}}
\def \Mye {\textnormal{\sffamily Mye}}
\def \Pair {\textnormal{\sffamily Pair}}
\def \Reg {\textnormal{\sffamily Reg}}
\def \Rev {\textnormal{\sffamily Rev}}
\def \SwapReg {\textnormal{\sffamily SwapReg}}
\let\E\relax
\DeclareMathOperator*{\E}{\mathbb{E}}
\DeclareMathOperator*{\argmax}{arg\,max}
\DeclareMathOperator*{\argmin}{arg\,min}
\begin{document}

\maketitle

\begin{abstract}
In Bayesian single-item auctions, a monotone bidding strategy––one that prescribes a higher bid for a higher value type––can be equivalently represented as a partition of the quantile space into consecutive intervals corresponding to increasing bids.~\citet{KSS24} prove that agile online gradient descent (OGD), when used to update a monotone bidding strategy through its quantile representation, is strategically robust in repeated first-price auctions: when all bidders employ agile OGD in this way, the auctioneer's average revenue per round is at most the revenue of Myerson's optimal auction, regardless of how she adjusts the reserve price over time.

In this work, we show that this strategic robustness guarantee is not unique to agile OGD or to the first-price auction: any no-regret learning algorithm, when fed gradient feedback with respect to the quantile representation, is strategically robust, even if the auction format changes every round, provided the format satisfies allocation monotonicity and voluntary participation. In particular, the multiplicative weights update (MWU) algorithm simultaneously achieves the optimal regret guarantee and a strong strategic robustness guarantee in this auction setting. At a technical level, our results are established via a simple relation that bridges Myerson’s auction theory and standard no-regret learning theory.
\end{abstract}

\section{Introduction}
Consider a repeated single-item auction game over $T$ rounds between an auctioneer and $n$ bidders. In each round, each bidder employs a no-regret learning algorithm\footnote{Informally, an algorithm is no-regret if it performs at least as well as the best fixed strategy in hindsight.}, while the auctioneer, knowing the bidders’ algorithms, strategically chooses the auction format to maximize her cumulative revenue. This setting was first studied by~\citet{BMSW18}, motivated by an influential work of~\citet{NST15} showing that bidder behavior on Microsoft Bing is largely consistent with no-regret learning. This motivation has since been reinforced by the growing adoption of automated bidding agents that employ techniques from no-regret learning~\citep{aggarwal24}.

\citet{BMSW18} showed that many widely adopted no-regret learning algorithms, such as the multiplicative weights update (MWU) algorithm, are susceptible to strategic manipulation: a strategic auctioneer can dynamically choose auction formats to extract the full welfare of a bidder using these algorithms (a result later extended to multiple bidders by~\citet{CWWZ23}), leaving the bidder with zero utility; moreover, the auctioneer can obtain a higher average revenue per round than the revenue of Myerson’s optimal auction by simply adjusting reserve prices in repeated first-price auctions. These results motivated the study of \emph{strategically robust} no-regret learning algorithms––algorithms that, when adopted by all bidders, cap the auctioneer’s average revenue per round at Myerson’s optimal revenue. Importantly, our focus is on the revenue cap in repeated single-item auctions~\citep{BMSW18,KSS24}; we do not address general non-manipulability in arbitrary normal-form or Bayesian games~\citep{DSS19,MMSS22,RZ24,ACMMSS25}.

Most strategically robust (and general non-manipulable) no-regret algorithms designed so far explicitly minimize various notions of swap regret~\citep{BMSW18,MMSS22,ACMMSS25}, which are stronger and more complex than standard regret. An exception is the recent work of~\citet{KSS24}, who showed that a natural no-regret learning algorithm––agile\footnote{Briefly, agile OGD projects an iterate onto the feasible decision space at each update, whereas lazy OGD maintains an unconstrained iterate and projects only at decision time~\citep[Section~5.4.1]{Hazan16}.} online gradient descent (OGD)––is strategically robust in repeated first-price auctions with reserve prices. A key aspect of their result is the choice of decision space:~\citet{KSS24} apply agile OGD over what we call the \emph{quantile strategy space}, which consists of quantile representations of monotone bidding strategies (see Figure~\ref{fig:illustration_quantile_strategy}). Conceptually, their algorithm can be viewed as feeding the gradient of a bidder’s utility with respect to the quantile strategy as the reward vector to agile OGD.

\begin{figure}[ht]
    \centering
    \begin{tikzpicture}[scale=1.2, >=stealth]
        \draw[->] (0,0) -- (4.5,0) node[right] {bidder's value};
        \draw[->] (0,0) -- (0,2.5) node[above] {bid};

        \draw[thick] (0,0) -- (1,0);
        \draw[thick] (1,1) -- (2,1);
        \draw[thick] (2,2) -- (4,2);
        \draw[dashed] (2,1) -- (2,2);
        \draw[dashed] (1,0) -- (1,1);
        \draw[dashed] (2,0) -- (2,1);
        \draw[dashed] (0,2) -- (2,2);
        \draw[dashed] (0,1) -- (1,1);
        \draw[dashed] (4,0) -- (4,2);

        \draw[thick, fill=white] (1,1) circle (2pt);
        \fill (1,0) circle (2pt);
        \draw[thick, fill=white] (2,2) circle (2pt);
        \fill (2,1) circle (2pt);
        \fill (4,2) circle (2pt);
        \draw (0,1) node[left] {$\nicefrac{1}{2}$};
        \draw (0,2) node[left] {$1$};
        \draw (1,0) node[below] {$v_1$};
        \draw (2,0) node[below] {$v_2$};
        \draw (4,0) node[below] {$v_{\max}$};
    \end{tikzpicture}
    \caption{This figure illustrates a monotone (i.e., non-decreasing) bidding strategy that bids $0$ for values in $[0,v_1]$, bids $\frac{1}{2}$ for values in $(v_1,v_2]$, and bids $1$ otherwise. Suppose that $0,\frac{1}{2},1$ are the only possible bids, and let $F$ denote the CDF of the bidder's value distribution.~\citet{KSS24} represent a monotone bidding strategy using the upper-tail probabilities of the value thresholds at which the bid increases ($1-F(v_1)$ and $1-F(v_2)$ in this case). We refer to this quantile-based representation as a quantile strategy. In this paper, we instead use the probability masses of the value intervals corresponding to distinct bids ($F(v_1)$, $F(v_2)-F(v_1)$ and $1-F(v_2)$ in this case) to represent the same quantile strategy, for a reason discussed in Section~\ref{section:technical_overview}.}
    \label{fig:illustration_quantile_strategy}
\end{figure}
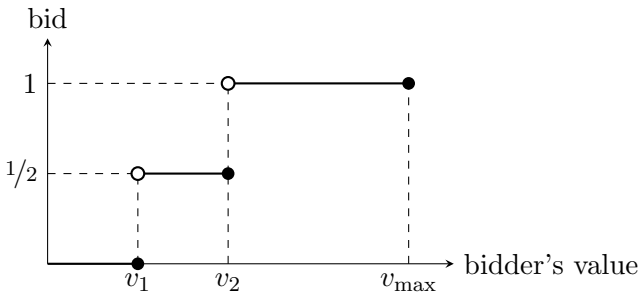

This observation raises a natural structural question that motivates our work:
\begin{quote}
What role does the quantile strategy space itself play in enabling strategic robustness, independent of the specific learning dynamics used?
\end{quote}

In this paper, we show that the quantile strategy space itself plays an important role: strategic robustness can be guaranteed without relying on agile OGD. Specifically, we provide a meta-algorithm (Algorithm~\ref{alg:meta}) that transforms any no-regret learning algorithm into a strategically robust one by feeding it the gradient of a bidder’s utility with respect to the quantile strategy as the reward vector. The resulting strategic robustness guarantee holds even if the auctioneer changes the auction format in every round, provided the format satisfies allocation monotonicity and voluntary participation\footnote{Informally, allocation monotonicity means that raising one’s bid cannot decrease the probability of winning the item, and our voluntary participation condition states that bidders can always ensure zero payment by bidding zero (see Assumption~\ref{assumption:auction_format}). These conditions are satisfied by most auction formats deployed in practice.}.

\begin{theorem}[Informal statement of Theorem~\ref{thm:meta_strategic_robustness} and Proposition~\ref{prop:meta_no_regret}]
If each bidder runs Algorithm~\ref{alg:meta} instantiated with an arbitrary no-regret learning algorithm, and the auction formats chosen by the auctioneer over $T$ rounds satisfy allocation monotonicity and voluntary participation, then the auctioneer’s expected cumulative revenue is at most $\Mye(\D)\cdot T + o(T)$, where $\Mye(\D)$ denotes Myerson’s optimal revenue. Moreover, the bidders incur no regret.
\end{theorem}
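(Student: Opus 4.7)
The plan is to combine the no-regret guarantee of the base learner---operating in the probability-mass representation of the quantile strategy space---with a per-round Myerson-style revenue bound evaluated at a carefully chosen fixed comparator $\beta^{*}$. The natural candidate is $\beta^{*}(v) = \max(0,\varphi(v))$, where $\varphi$ denotes the (ironed) Myerson virtual value of~$\D$; this is monotone and hence lies in the quantile strategy space. For any monotone auction format $(x_{t},p_{t})$ with voluntary participation, the induced allocation $x_{t}(\beta^{*}(\cdot))$ is monotone in $v$ and one has $p_{t}(b)\le b\cdot x_{t}(b)$, so Myerson's payment identity gives $\Rev_{t}(\beta^{*}) \le \E_{v}[\varphi(v)\cdot x_{t}(\beta^{*}(v))] \le \Mye(\D)$, using that $\Mye(\D)=\max_{y \text{ monotone}}\E_{v}[\varphi(v)\,y(v)]$.

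A second ingredient is a structural observation motivating the probability-mass parametrization (rather than the raw-quantile parametrization of~\citet{KSS24}): under this parametrization the per-round utility $u_{t}(m)$ is \emph{concave} in $m$. Concretely, after an Abel summation, $u_{t}(m)$ decomposes into a linear term (the payment contributions) and a term of the form $-\sum_{i}(x_{t}(b_{i})-x_{t}(b_{i-1}))\cdot T(M_{i})$, where $M_{i}=m_{0}+\cdots+m_{i-1}$ is linear in $m$, $T(q)=\int_{0}^{F^{-1}(q)} s\,f(s)\,ds$ is convex, and the coefficients $x_{t}(b_{i})-x_{t}(b_{i-1})\ge 0$ by allocation monotonicity. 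Given concavity of $u_{t}$ together with Algorithm~\ref{alg:meta} feeding $\nabla u_{t}(m_{t})$ as the reward, the standard regret bound of the base learner yields $\sum_{t} u_{t}(\beta_{t}) \ge \sum_{t} u_{t}(\beta^{*}) - o(T)$. Combining this with the per-round identity $\Rev_{t}(\beta)=\E_{v}[v\cdot x_{t}(\beta(v))]-u_{t}(\beta)$ expresses the cumulative revenue in terms of a virtual-welfare Myerson bound at $\beta^{*}$ plus a welfare gap between $\beta_{t}$ and $\beta^{*}$.

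The main obstacle is controlling this welfare gap $\sum_{t}\E_{v}[v\cdot(x_{t}(\beta_{t}(v))-x_{t}(\beta^{*}(v)))]$, which a priori need not be nonpositive: a strategy $\beta_{t}$ that overbids relative to $\beta^{*}$ produces more welfare, and the claim is that it correspondingly extracts no more revenue than $\Mye(\D)$. I expect to resolve this by leveraging the linear (not merely concave) no-regret guarantee in the mass representation: each coordinate of $\nabla_{m} u_{t}(m)$ has the interpretation of a ``marginal virtual welfare'' of shifting mass toward a given bid level, so that the linear regret inequality $\sum_{t}\langle\nabla_{m} u_{t}(m_{t}),\,m_{t}-m^{*}\rangle\ge -o(T)$ telescopes directly into the desired Myerson bound, bypassing the welfare gap altogether---this is, I suspect, the ``simple relation'' alluded to in the abstract. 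The multi-bidder extension reduces to the single-bidder analysis per bidder after conditioning on the others' random bid profiles (the effective allocation and payment rule on each bidder remain monotone and voluntary-participation-satisfying). Finally, the ``no regret'' claim for bidders is immediate from the base learner's guarantee, since the quantile strategy space contains every constant-bid strategy and, via fine enough bid discretization, approximates every monotone bidding strategy.
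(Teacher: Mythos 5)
Your proposal captures several correct high-level ingredients — the probability-mass parametrization of quantile strategies, the concavity of the quantile utility function, and the idea that a \emph{linear} regret guarantee (not merely the convex one) should telescope directly into a per-round Myerson-type bound. But there is a genuine gap, in two places.

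First, your chain $p_t(b)\le b\cdot x_t(b)\Rightarrow\Rev_t(\beta^*)\le\E_v[\varphi(v)\,x_t(\beta^*(v))]$ is unjustified. Voluntary participation only requires $p_i(0,\b_{-i})=0$; it places no constraint on payments at nonzero bids (an all-pay auction, where a losing bidder still pays her bid, satisfies both of the paper's assumptions but grossly violates $p_t(b)\le b\cdot x_t(b)$). Myerson's payment identity, which is what you appear to be invoking, only holds for IC auctions — and the auction format $\M^{(t)}$ chosen by the auctioneer is \emph{not} assumed IC. This is precisely the difficulty the paper must confront, and it is why the paper introduces an \emph{auxiliary} auction $\tilde{\M}^{(t)}$: it keeps the composed allocation $\tilde{x}_i^{(t)}(v)=x_i^{(t)}(s_i^{(t)}(v),\cdot)$ but replaces the payment with Myerson's payment rule, making $\tilde{\M}^{(t)}$ IC and IR by construction, and thereby giving $\Rev_{\tilde{\M}^{(t)}}(\D)\le\Mye(\D)$ for free.

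Second, and more centrally, the comparator you plug into the regret inequality is wrong. You take $\beta^*$ to be (essentially) a Myerson-optimal monotone strategy; the paper instead takes the comparator to be $\e=(1,0,\dots,0)$, i.e., \emph{the strategy that always bids zero}. This is the crux of the argument: voluntary participation makes $p_i^{(t)}(0,\b_{-i}^{(t)})=0$, which is exactly what is needed so that the inner product $\nabla q_i^{(t)}(\bm{\pi}^{(i,t)})^\top(\e-\bm{\pi}^{(i,t)})$ collapses, after reindexing a double sum, into the identity
\[
\sum_{i\in[n]}\E_{\v_{-i}^{(t)}}\bigl[\nabla q_i^{(t)}(\bm{\pi}^{(i,t)})^\top(\e-\bm{\pi}^{(i,t)})\mid Q^{(t)}\bigr]
=\Rev_{\M^{(t)}}\bigl(\D\mid Q^{(t)}\bigr)-\Rev_{\tilde{\M}^{(t)}}(\D).
\]
This is Lemma~\ref{lem:main_lemma}, the ``simple relation'' advertised in the abstract. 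The welfare gap you worry about never appears because you are not comparing against a revenue-optimal strategy at all; you are comparing against the zero bid, and the algebra pairs the surviving gradient terms exactly with Myerson's payment for $\tilde{\M}^{(t)}$ (via Claim~\ref{claim:auxiliary_auction_payment}). Your proposal ends with ``I expect to resolve this by \ldots telescopes directly,'' which is where the actual work lies; without the auxiliary auction and the zero-bid comparator, the telescoping does not materialize.

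For completeness: the ``no regret'' half of the statement and the multi-bidder conditioning step in your write-up are both essentially correct and match the paper (Proposition~\ref{prop:meta_no_regret} and the tower-property manipulation in Eq.~\eqref{eq:strategic_loss}).
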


In particular, when combined with our meta-algorithm, MWU––despite its known vulnerability to manipulation––achieves optimal regret and strong strategic robustness in our auction setting. The strategic robustness guarantee does not stem from any new property of MWU itself, but from the structural transformation induced by our meta-algorithm.

\begin{corollary}[Informal statement of Corollary~\ref{cor:MWU+meta}]
If all bidders run Algorithm~\ref{alg:meta} instantiated with MWU, and the auction formats chosen by the auctioneer satisfy allocation monotonicity and voluntary participation, then the auctioneer’s expected cumulative revenue is at most $\Mye(\D)\cdot T + O(n\cdot\sqrt{T\log(K)})$, where $K$ is the number of possible bids. Moreover, each bidder’s regret is at most $O(\sqrt{T\log(K)})$.
\end{corollary}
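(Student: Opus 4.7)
The plan is to derive both claims by instantiating Theorem~\ref{thm:meta_strategic_robustness} and Proposition~\ref{prop:meta_no_regret} with $\MWU$ as the base no-regret learner. The preliminary observation is that, in the probability-mass representation adopted here, a quantile strategy over $K$ possible bids is a vector of nonnegative entries summing to one—i.e., a point on the $(K-1)$-simplex. This is precisely the decision space on which $\MWU$ is defined, with each of the $K$ bids playing the role of an ``expert,'' so Algorithm~\ref{alg:meta} instantiated with $\MWU$ is well-posed.

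For the regret claim, I would first verify that the reward vectors fed to $\MWU$ are uniformly bounded. In the probability-mass parametrization, the coordinate of the utility gradient corresponding to bid $i$ equals the bidder's expected utility conditional on placing bid $i$; after the standard value normalization to $[0,1]$ and using voluntary participation, this quantity lies in $[0,1]$. With $K$ experts and bounded rewards, the textbook $\MWU$ analysis gives per-bidder regret $O(\sqrt{T\log K})$, and Proposition~\ref{prop:meta_no_regret} then yields the no-regret guarantee stated in the corollary. Feeding the same $O(\sqrt{T\log K})$ regret bound into Theorem~\ref{thm:meta_strategic_robustness} produces the $O(n\sqrt{T\log K})$ additive term on top of $\Mye(\D)\cdot T$, since the meta-theorem's $o(T)$ strategic-robustness overhead scales linearly with the sum of the bidders' individual regret bounds.

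The main thing to check is the uniform boundedness of these gradient coordinates, as they depend on the (possibly adversarial) opposing strategies and on the auction format chosen each round. Boundedness should follow from allocation monotonicity and voluntary participation—together these ensure that a bidder's per-round utility lies in $[0,1]$ no matter what the auctioneer and other bidders do—but one must set the normalization carefully and confirm that the gradient computed inside the meta-algorithm really has the conditional-utility interpretation described above. Once this is in place, the corollary is a direct substitution into the meta-results.
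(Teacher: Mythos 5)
Your overall plan—plug $\MWU$ into Proposition~\ref{prop:meta_no_regret} and Theorem~\ref{thm:meta_strategic_robustness}, and cite $\MWU$'s $O(\sqrt{T\log K})$ regret bound—matches the paper's proof of Corollary~\ref{cor:MWU+meta} exactly; that proof is literally a two-line substitution.

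However, the verification you propose for reward boundedness rests on a misidentification of the gradient, and that step would not go through as you describe it. You claim the $k$-th coordinate of $\nabla q_i^{(t)}$ equals the bidder's expected utility conditional on placing the $k$-th bid. This is not what it is: in the probability-mass parametrization, increasing $\pi_k$ while holding the other coordinates fixed not only changes the mass assigned to bid $\tfrac{k-1}{K}$, it also shifts every downstream threshold $\sum_{\ell\le j}\pi_\ell$ for $j\ge k$, and hence perturbs the value intervals associated with every higher bid. This is exactly why Lemma~\ref{lem:quantile_utility_function_is_concave} derives the telescoping formula $\partial q_i^{(t)}/\partial\pi_k = \sum_{j\ge k}\big(x_i^{(t)}(\tfrac{j-1}{K},\cdot)-x_i^{(t)}(\tfrac{j}{K},\cdot)\big)F_i^{-1}\big(\sum_{\ell\le j}\pi_\ell\big) + x_i^{(t)}(1,\cdot) - p_i^{(t)}(\tfrac{k-1}{K},\cdot)$, which is not a conditional expected utility for bid $\tfrac{k-1}{K}$. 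Moreover the resulting bound is $[-1,1]$, not $[0,1]$, and the assumption that drives it is allocation monotonicity (Assumption~\ref{assumption:auction_format}-\ref{assumption:allocation_monotonicity}) combined with the trivial $[0,1]$ bounds on allocations and payments; voluntary participation plays no role in the boundedness claim (it is used in Lemma~\ref{lem:main_lemma}, not here). There is also an off-by-one slip: the bid set has $K+1$ elements and the quantile strategy space is $\Delta([K+1])$ (the $K$-simplex), not a $(K-1)$-simplex over $K$ bids, though this does not affect the asymptotics. Once you replace your interpretation with the boundedness conclusion already established in Lemma~\ref{lem:quantile_utility_function_is_concave}, the remainder of the argument—Lemma~\ref{lem:MWU} for $\MWU$'s regret, Proposition~\ref{prop:meta_no_regret} for the per-bidder regret claim, and Theorem~\ref{thm:meta_strategic_robustness} with $\sum_{i}\Reg_{\A_i}=O(n\sqrt{T\log K})$ for the revenue claim—is correct and coincides with the paper's.
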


For comparison, the strategic robustness guarantee of the algorithm in~\citet{BMSW18} has a $T^{3/4}$ dependence, while the guarantees of the algorithms in~\citet{MMSS22,KSS24,ACMMSS25}, when applied to the auction setting, exhibit various $K^{\Omega(1)}$ dependencies. However, these algorithms address broader or stronger objectives––such as general non-manipulability in Bayesian and polytope games or incentive compatibility––beyond the revenue cap studied here.

\subsection{Technical overview}\label{section:technical_overview}
Now we provide an informal technical overview that sketches the main ideas behind our results. We first briefly review the analysis in~\citet{KSS24}, and then explain how our approach builds on but also differs from theirs, describing our technical contributions.

\paragraph*{The analysis in~\citet{KSS24}.} \citet{KSS24} study a setting where the auctioneer strategically sets reserve prices in repeated first-price auctions. They first consider the single-bidder case and derive an elegant relation~\citep[page 21]{KSS24} that can be phrased as
\begin{equation}\label{eq:KSS_relation}
    -\nabla u^{(t)}(\bm{p}^{(t)})^{\top}\bm{p}^{(t)}=\textrm{auctioneer's revenue at round $t$}\,-\,\textrm{revenue of a posted-price auction},
\end{equation}
where $\bm{p}^{(t)}$ represents the bidder's quantile strategy (see Figure~\ref{fig:illustration_quantile_strategy} caption) at round $t$, and $u^{(t)}$ is the bidder's utility function at round $t$. Using a potential function argument based on Eq.~\eqref{eq:KSS_relation}, they prove that agile OGD, when used to update the quantile strategy, is strategically robust in the single-bidder case.

For the general setting with multiple bidders, their analysis follows the same plan as~\citet{BMSW18}. Specifically, they first show that if a bidder adopts agile OGD (in a manner similar to the single-bidder case), then in hindsight she can gain only negligible utility by swapping the bidding strategies associated with different value types across the entire time horizon (this property can be viewed as an online analog of incentive compatibility, or as a no-swap-regret property between value types). This, in turn, allows them to construct a global incentive-compatible auction that upper bounds the auctioneer’s average revenue per round. Their proof of the no-swap-regret property is based on another potential function argument inspired by their analysis of the single-bidder case, and it is carried out under a technical assumption that the PDFs of the bidders’ value distributions are bounded everywhere\footnote{Our approach sidesteps this assumption.}.

\paragraph*{Our approach.} The starting point of our approach is the relation in Eq.~\eqref{eq:KSS_relation} from~\citet{KSS24}, and from there we proceed in a different direction. We first identify the right abstraction to extend this relation to the more general setting with multiple bidders, where the auctioneer is allowed to strategically choose auction formats that satisfy allocation monotonicity and voluntary participation. Informally, we construct an incentive-compatible auxiliary auction $\tilde{\M}^{(t)}$ for each round $t\in[T]$ such that (omitting some technical details for clarity)
\begin{equation}\label{eq:general_relation}
    -\sum_{i\in[n]}\nabla u_i^{(t)}(\bm{p}_i^{(t)})^{\top}\bm{p}_i^{(t)}=\textrm{auctioneer's revenue at round $t$}\,-\,\textrm{revenue of auxiliary auction $\tilde{\M}^{(t)}$},
\end{equation}
where $\bm{p}_i^{(t)}$ represents bidder $i$'s quantile strategy at round $t$, and $u_i^{(t)}$ is the bidder $i$'s utility function at round $t$. The derivation of this relation is the main technical contribution of our work.

Then, we prove a simple but crucial fact––if each bidder $i\in[n]$ adopts our meta-algorithm, which feeds the gradient $\nabla u_i^{(t)}(\bm{p}_i^{(t)})$ as the reward vector to a no-regret learning algorithm $\A_i$, then the summation $-\sum_{t\in[T]}\nabla u_i^{(t)}(\bm{p}_i^{(t)})^{\top}\bm{p}_i^{(t)}$ is upper bounded by algorithm $\A_i$'s regret. Thus, by summing both sides of Eq.~\eqref{eq:general_relation} over $T$ rounds, we obtain that
\[
    \sum_{i\in[n]}\textrm{$\A_i$'s regret} \ge \sum_{t\in[T]}\textrm{auctioneer's revenue at round $t$}\,-\,\textrm{revenue of auxiliary auction $\tilde{\M}^{(t)}$}.
\]
Since the auxiliary auction $\tilde{\M}^{(t)}$ is incentive-compatible, its revenue is at most Myerson's optimal revenue. It follows that the auctioneer's total revenue is at most $T$ times Myerson's optimal revenue, plus a lower-order regret term.

Finally, as mentioned in the caption of Figure~\ref{fig:illustration_quantile_strategy}, we use a different representation of quantile strategies in this paper. Instead of using the upper-tail probabilities of the value thresholds at which the bid increases, we represent a quantile strategy by the probability masses of the value intervals that correspond to distinct bids. (Because of this choice, the relation in Eq.~\eqref{eq:general_relation} takes a slightly different form in Lemma~\ref{lem:main_lemma}.) This transforms the decision space into a probability simplex, which enables MWU to recover its optimal regret guarantee and, together with our meta-algorithm, to achieve strong strategic robustness.

\subsection{Related work}
There is a substantial literature on no-regret learning algorithms for repeated auctions~\citep{FGLMS21,BGMMS22,ZHZFW22,BFG23,AFZ25,DLTZ25,HWZ25}. The study of the interaction between a strategic auctioneer and no-regret learning algorithms was initiated by~\citet{BMSW18}, who showed that the auctioneer can extract the full welfare of a bidder adopting any mean-based no-regret algorithm, such as MWU. This result was later extended to multiple bidders by~\citet{CWWZ23}. Similar manipulation phenomena have also been demonstrated in related settings, including principal-agent problems~\citep{GKSTVWW24,LC24}, first-price auctions between two bidders~\citep{RZ24}, and normal-form games~\citep{DSS19}. In the latter case, computing a near-optimal manipulation strategy against MWU is known to be \NP-hard~\citep{ADR25}. Together, these results motivate the design of no-regret learning algorithms that are non-manipulable in general normal-form, Bayesian, and polytope games~\citep{DSS19,MMSS22,ACMMSS25}, or strategically robust in auction environments~\citep{BMSW18,KSS24}.

A prominent line of work achieves general non-manipulability (or strategic robustness in auction environments) by minimizing various notions of swap regret. \citet{BMSW18} design an efficient algorithm for the auction setting with bandit feedback---inspired by the no-swap-regret algorithm of~\citet{BM07}---to minimize a notion of swap regret between value types. For more general Bayesian and polytope games,~\citet{MMSS22} provide an (exponential-time) algorithm based on the no-swap-regret algorithm of~\citet{BM07} to minimize a notion called polytope swap regret, and~\citet{ACMMSS25} design an efficient algorithm to minimize profile swap regret, using Blackwell’s approachability~\citep{Blackwell1956} and semi-separation~\citep{DFFPS25}. In fact, minimizing appropriate notions of swap regret is both sufficient and necessary to achieve non-manipulability in general normal-form games~\citep{DSS19,MMSS22} as well as in Bayesian and polytope games~\citep{MMSS22,RZ24,ACMMSS25}.

Our work is complementary to this literature. Rather than studying general non-manipulability for general games, we focus on a structured auction environment where auction-specific properties can be leveraged to obtain revenue caps. In this context, the recent result of~\citet{KSS24} highlights a distinct phenomenon: they show that agile OGD, when operating over the quantile strategy space, is strategically robust in repeated first-price auctions with reserve prices. Our results identify a general principle underlying this phenomenon: by learning in the quantile strategy space, any no-regret learning algorithm can be transformed into a strategically robust one for repeated single-item auctions, even when the auctioneer is allowed to change the auction format over time, provided the format satisfies allocation monotonicity and voluntary participation.

Finally, we note that there are studies on strategic interactions between users who employ no-regret learning algorithms (e.g.,~\citet{KN22a,KN22b,AHPY24,KHT24}) and between heterogeneous agents~\citep{EKT25}. Other works study additional desirable properties of no-regret learning in strategic environments, such as Pareto optimality~\citep{AECS24,ACMMSS25} and strategic robustness guarantees from the auctioneer’s perspective (e.g.,~\citet{ARS13,MM15,LHW18,DLM19,HK25}).

\paragraph{Concurrent work.} Independently and concurrently, Yang Cai, Haipeng Luo, Chen-Yu Wei, and Weiqiang Zheng (private communication) have shown that any no-regret learning algorithm with gradient feedback under the quantile-based formulation is strategically robust in single-bidder repeated first-price auctions.

\section{Preliminaries}\label{section:preliminaries}
\subsection{Single-item auctions}\label{section:myersons_auction}
In the classic setting of single-item auctions, an auctioneer auctions a single item to $n$ bidders $[n]$. Each bidder $i\in[n]$ has a private value $v_i\in[0,1]$ for the item, drawn independently from a \emph{continuous} prior distribution $\D_i$. We let $\D:=\D_1\times\cdots\times\D_n$ denote the joint distribution of the value vector $\v=(v_1,\dots,v_n)$. For each $i\in[n]$, we let $F_i:[0,1]\to [0,1]$ denote the cumulative distribution function (CDF) of $\D_i$.

The auctioneer's goal is to design an auction, which takes bidders' bids as input and determines who gets the item and how much they pay, in order to maximize her expected revenue. Formally, an auction $\M:=(\x, \p)$ is specified by an \emph{allocation function} $\x:[0,1]^n\to\X([n])$ (where $\X([n]) := \left\{ \mathbf{y} \in \mathbb{R}_{\ge 0}^n \;\middle|\; \sum_{i=1}^n y_i \le 1 \right\}$) and a \emph{payment function} $\p:[0,1]^n\to[0,1]^n$. Specifically, given a bid vector $\b=(b_1,\dots,b_n)$ as input, the $i$-th coordinate $x_i(\b)$ of the output of the allocation function $\x$ is the probability that bidder $i$ receives the item, and the $i$-th coordinate $p_i(\b)$ of the output of the payment function $\p$ is the amount which bidder $i$ pays the auctioneer.

An auction $\M=(\x, \p)$ is \emph{individual-rational} (IR) if each bidder's expected utility is guaranteed to be non-negative when bidding her true value, i.e., for all $i\in[n]$, for any $v_i\in[0,1]$ and $\b_{-i}:=(b_1,\dots,b_{i-1},b_{i+1},\dots,b_n)\in[0,1]^{n-1}$,
\[
    x_i(v_i,\b_{-i})v_i-p_i(v_i,\b_{-i})\ge 0,
\]
where $(v_i,\b_{-i}):=(b_1,\dots,b_{i-1},v_i,b_{i+1},\dots,b_n)$. Moreover, the auction $\M$ is \emph{incentive-compatible} (IC) if each bidder's expected utility is maximized when bidding her true value, i.e., for all $i\in[n]$, for any $v_i,b_i\in[0,1]$ and $\b_{-i}\in[0,1]^{n-1}$,
\[
     x_i(v_i,\b_{-i})v_i-p_i(v_i,\b_{-i})\ge  x_i(b_i,\b_{-i})v_i-p_i(b_i,\b_{-i}).
\]
\citet{myerson81} provides a characterization of IC auctions, a result known as Myerson's lemma\footnote{We note that the original definitions of IC and IR in~\citet{myerson81} are \emph{ex interim}. However, Myerson's optimal auction is \emph{ex post} IC and IR, and Myerson's lemma can be adapted accordingly (see e.g.,~\citet{roughgarden13}).}.
\begin{lemma}[Myerson's lemma]\label{lem:myersons_lemma}
An auction $\M=(\x,\p)$ is IC if and only if the following hold:
\begin{enumerate}[i.]
    \item The allocation function $\x$ is non-decreasing, i.e., for all $i\in[n]$,
    \[
        x_i(b_i,\b_{-i})\le x_i(b_i',\b_{-i}),\,\,\forall\,\b_{-i}\in [0,1]^{n-1},\,\,\forall\,b_i,b_i'\in [0,1] \textrm{ s.t. } b_i\le b_i'.
    \]
    \item For all $i\in[n]$, bidder $i$'s payment $p_i$ satisfies that for any $\b\in[0,1]^n$,
    \[
        p_i(\b)=p_i(0,\b_{-i})+x_i(b_i,\b_{-i})b_i - \int_{0}^{b_i} x_i(z,\b_{-i})dz.
    \]
\end{enumerate}
\end{lemma}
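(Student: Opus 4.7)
The plan is to prove each direction of the equivalence separately, using the standard envelope-theorem toolkit. Fixing a bidder $i \in [n]$ and competing bids $\b_{-i} \in [0,1]^{n-1}$, I would let $U_i(v_i) := x_i(v_i,\b_{-i})\,v_i - p_i(v_i,\b_{-i})$ denote bidder $i$'s truthful utility viewed as a function of her reported value. Observe that the IC condition, applied with true value $v_i$ and misreport $v_i'$, rewrites as $U_i(v_i) \ge x_i(v_i',\b_{-i})\,v_i - p_i(v_i',\b_{-i})$, so that $U_i$ is the pointwise supremum over $v_i' \in [0,1]$ of the affine maps $v_i \mapsto x_i(v_i',\b_{-i})\,v_i - p_i(v_i',\b_{-i})$.

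For the forward direction (IC implies (i) and (ii)), I would first extract the monotonicity condition (i) by adding the two IC inequalities obtained by interchanging $v_i$ and $v_i'$, which immediately yields $(x_i(v_i',\b_{-i}) - x_i(v_i,\b_{-i}))(v_i' - v_i) \ge 0$. For the payment formula (ii), I would use that $U_i$ is convex (as a supremum of affine functions), hence absolutely continuous and differentiable almost everywhere, with derivative given by the envelope theorem as $U_i'(v_i) = x_i(v_i,\b_{-i})$. Integrating this identity from $0$ to $b_i$, using $U_i(0) = -p_i(0,\b_{-i})$, and then solving for $p_i(b_i,\b_{-i})$ produces exactly the required identity.

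For the reverse direction (conditions (i) and (ii) imply IC), I would substitute the formula from (ii) into the utility from a misreport $b_i$ at true value $v_i$; a short calculation shows that the truthful utility $U_i(v_i)$ exceeds this misreport utility by exactly $\int_{b_i}^{v_i} x_i(z,\b_{-i})\,dz - x_i(b_i,\b_{-i})(v_i - b_i)$. The monotonicity condition (i) then makes this quantity non-negative in both cases $v_i \ge b_i$ and $v_i \le b_i$ (the integrand is, respectively, at least or at most $x_i(b_i,\b_{-i})$), thereby establishing IC.

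The main technical subtlety is justifying the envelope-style derivative computation in the forward direction, which requires that $x_i(\cdot,\b_{-i})$ be Riemann-integrable so that the integral in (ii) is well-defined and so that the fundamental theorem of calculus applies to $U_i$; fortunately, the monotonicity just proved in (i) guarantees that $x_i(\cdot,\b_{-i})$ has at most countably many discontinuities and is therefore Riemann-integrable on $[0,b_i]$, so no additional regularity assumption is needed.
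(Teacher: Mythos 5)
Note that the paper does not prove this lemma itself; it is stated as a citation to \citet{myerson81}, with a footnote pointing to a textbook exposition of the \emph{ex post} variant, so there is no in-paper proof to compare against. Your argument is the standard envelope-theorem derivation of Myerson's lemma and is correct: you obtain allocation monotonicity by adding the two crossed IC inequalities, observe that the truthful-utility curve $U_i(v_i)=x_i(v_i,\b_{-i})v_i-p_i(v_i,\b_{-i})$ is the pointwise supremum of the affine maps $v_i\mapsto x_i(v_i',\b_{-i})v_i-p_i(v_i',\b_{-i})$, deduce $U_i'=x_i(\cdot,\b_{-i})$ almost everywhere, integrate with the anchor $U_i(0)=-p_i(0,\b_{-i})$ to recover the payment identity, and for the converse substitute the payment formula into the misreport utility and sign the remaining integral $\int_{b_i}^{v_i}x_i(z,\b_{-i})\,dz - x_i(b_i,\b_{-i})(v_i-b_i)$ using monotonicity. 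One small point of rigor: convexity alone does not give absolute continuity of $U_i$ on the \emph{closed} interval $[0,1]$ (convex functions can jump down at endpoints), but here the slopes $x_i(\cdot,\b_{-i})$ lie in $[0,1]$, so $U_i$ is in fact $1$-Lipschitz, which is what licenses the fundamental theorem of calculus step; and your closing observation that monotonicity of $x_i$ supplies the Riemann integrability needed for the integral in (ii) to be well-defined is exactly the right point to flag.
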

Given a joint distribution $\D$ of bidders' values and an IC and IR auction $\M=(\x, \p)$, we denote the auctioneer's expected revenue by $\Rev_{\M}(\D):=\E_{\v\sim\D}\big[\sum_{i\in[n]}p_i(\v)\big]$. Then, we let $\Mye(\D)$ denote the maximum expected revenue achievable by any IC and IR auction, namely,
\[
    \Mye(\D):=\max_{\M} \Rev_{\M}(\D) \textrm{ s.t.~$\M$ is IC and IR.}
\]
\citet{myerson81} derives an auction that achieves this optimal revenue, known as Myerson’s auction.

\subsection{Repeated auction game with no-regret learners}\label{section:repeated_auction}
In this paper, we study a setting where the auctioneer repeatedly auctions a single item to $n$ bidders. Here, we deviate from Myerson's classic setting, as introduced in Section~\ref{section:myersons_auction}, by assuming that the set of possible bids is finite and equally spaced\footnote{This assumption aligns with practice and prior work~\citep{BMSW18,KSS24}.}:
\begin{assumption}\label{assumption:finite_bids}
The set of possible bids is $B:=\{\frac{i}{K}\mid i\in \{0,\dots,K\}\}$ for some $K\in\Z_{+}$.
\end{assumption}
Since the set of possible bids is finite, in an auction, a bidder needs to choose a \emph{bidding strategy} that specifies a bid for each possible value. We let $\S$ denote the set of all possible bidding strategies, i.e., $\S:=\{s\mid s:[0,1]\to B\}$. Next, we formalize the \emph{repeated auction game} between the auctioneer and $n$ bidders over $T$ rounds:
\begin{enumerate}[(1)]
    \item At each round $t\in[T]$, each bidder $i\in[n]$ selects a bidding strategy $s_i^{(t)}\in\S$.
    \item The auctioneer chooses an auction format $\M^{(t)}=(\x^{(t)},\p^{(t)})$ by specifying an allocation function $\x^{(t)}:B^n\to\X([n])$ and a payment function $\p^{(t)}:B^n\to[0,1]^n$.
    \item Then, a value vector $\v^{(t)}=(v_1^{(t)},\dots,v_n^{(t)})$ is sampled from the prior distribution $\D$, and the corresponding bid vector $\b^{(t)}=(b_1^{(t)},\dots,b_n^{(t)})$ is given by $b_i^{(t)}=s_i^{(t)}(v_i^{(t)})$ for all $i\in[n]$.
    \item At the end of round $t$, the bid vector $\b^{(t)}$ and the auction format $\M^{(t)}$ are revealed to all bidders. Each bidder $i\in[n]$ receives the item with probability $x^{(t)}_i(\b^{(t)})$ and pays $p_i^{(t)}(\b^{(t)})$ to the auctioneer.
\end{enumerate}

We note that, in the repeated auction game, each bidder $i\in[n]$ is effectively solving an online optimization problem: At each round $t\in[T]$, bidder $i$ commits to a bidding strategy $s_i^{(t)}\in \S$, and then a \emph{utility function} $u^{(t)}_i:\S\to[-1,1]$, determined by the auction format $\M^{(t)}$ and other bidders' bids $\b_{-i}^{(t)}=(b^{(t)}_1,\dots,b^{(t)}_{i-1},b^{(t)}_{i+1},\dots,b^{(t)}_n)$, is revealed to her. Specifically, bidder $i$'s utility function $u^{(t)}_i$ is given by
\begin{equation}\label{eq:utility_function}
    u^{(t)}_i(s):=\E_{v_i\sim\D_i}\big[x_i^{(t)}\big(s(v_i),\b_{-i}^{(t)}\big)\cdot v_i-p^{(t)}_i\big(s(v_i),\b_{-i}^{(t)}\big)\big],\,\,\forall\,s\in\S.
\end{equation}
We let $\Reg_i$ denote bidder $i$'s \emph{regret}—the difference between the cumulative utility of the best fixed strategy in hindsight and bidder $i$'s cumulative utility, i.e.,
\begin{equation}\label{eq:regret}
    \Reg_i:=\max_{s\in\S} \sum_{t\in[T]} \big(u^{(t)}_{i}(s) - u^{(t)}_{i}(s_i^{(t)})\big).
\end{equation}
We make the following assumption on bidders' strategic behavior:
\begin{assumption}
Each bidder $i\in[n]$ employs a no-regret learning algorithm $\A_i$. Specifically, at each round $t\in[T]$, the no-regret learning algorithm $\A_i$ takes the utility functions $u_i^{(1)},\dots,u_i^{(t-1)}$ from previous rounds as input and outputs a strategy $s_i^{(t)}$, and it guarantees that $\Reg_i=o(T)$.
\end{assumption}
Moreover, we restrict the class of auction formats available to the auctioneer.
\begin{assumption}\label{assumption:auction_format}
The auction format $\M^{(t)}=(\x^{(t)},\p^{(t)})$ chosen by the auctioneer at each round $t\in[T]$ must satisfy the following conditions:
\begin{enumerate}[i.]
    \item\label{assumption:allocation_monotonicity} Allocation monotonicity: The allocation function $\x^{(t)}$ is non-decreasing, i.e., for all $i\in[n]$,
    \[
        x^{(t)}_i(b_i,\b_{-i})\le x^{(t)}_i(b_i',\b_{-i}),\,\,\forall\,\b_{-i}\in B^{n-1},\,\,\forall\,b_i,b_i'\in B \textrm{ s.t. } b_i\le b_i'.
    \]
    \item\label{assumption:voluntary_participation} Voluntary participation: The payment is zero when a bidder bids zero, i.e., for all $i\in[n]$,
    \[
        p^{(t)}_i(0,\b_{-i})=0,\,\,\forall\,\b_{-i}\in B^{n-1}.
    \]
\end{enumerate}
\end{assumption}
Our goal in this paper is to design no-regret learning algorithms that are \emph{strategically robust}.
\begin{definition}
We say that a no-regret learning algorithm $\A$ is strategically robust in the above repeated auction game if, for any joint prior distribution $\D$, the auctioneer's expected cumulative revenue
\[
    \E_{\v^{(1)},\dots,\v^{(T)}\sim\D}\Big[\sum_{t\in[T]}\sum_{i\in[n]} p_i^{(t)}(\b^{(t)})\Big],
\]
is at most $\Mye(\D)\cdot T+o(T)$ when all bidders adopt $\A$, regardless of the auction formats chosen by the auctioneer (which may be chosen randomly or adaptively, as long as they satisfy Assumption~\ref{assumption:auction_format}).
\end{definition}

\subsection{Discussion on the model assumptions}
We briefly discuss our model assumptions, highlighting differences from the literature.
\begin{proof}[Assumption on auction formats and bidders' feedback]\renewcommand{\qedsymbol}{}
Our setting is in line with previous works by~\citet{BMSW18} and~\citet{KSS24}. The main distinction is the class of auction formats available to the auctioneer. Specifically,~\citet{BMSW18} impose no restrictions on auction formats. \citet{KSS24} focus exclusively on first-price auctions with reserve prices. Our assumption on auction formats (Assumption~\ref{assumption:auction_format}) strikes a middle ground that captures most auctions deployed in practice, including first-price, second-price, and all-pay auctions with reserve prices, posted-price auctions, and Myerson's optimal auction.

Another difference concerns bidders’ feedback. In our general setup, we consider the full-information setting, rather than the more general bandit-feedback setting considered in~\citet{BMSW18}. In our meta-algorithm, it suffices that, at each round $t\in[T]$, each bidder $i\in[n]$ can observe $x_i^{(t)}\big(b,\b_{-i}^{(t)}\big)$ and $p_i^{(t)}\big(b,\b_{-i}^{(t)}\big)$ for all $b\in B$. For example, in a first-price auction, a bidder can infer these quantities given the highest competing bid, which coincides with the feedback setting studied in~\citet{KSS24}.
\end{proof}

\begin{proof}[Randomness from private values]\renewcommand{\qedsymbol}{}
In contrast to the standard no-regret learning literature, where randomization over strategies is typically necessary to guarantee no regret, our repeated auction setting does not require bidders to employ randomization when choosing their bidding strategies. Moreover, the auctioneer selects the auction format after observing bidders' bidding strategies, which seems to make no-regret learning impossible. Nonetheless, we note that there is intrinsic randomness in bidders' private values, which are sampled after the auction format is chosen. This randomness can be used by an algorithm to achieve no regret and strategic robustness, as demonstrated by~\citet{KSS24}.
\end{proof}

\begin{proof}[Consistency with Myerson's setting]\renewcommand{\qedsymbol}{}
As mentioned earlier, our repeated auction setting deviates from Myerson's classic setting by restricting the set of possible bids to a finite set (Assumption~\ref{assumption:finite_bids}), and by allowing essentially arbitrary payment functions (Assumption~\ref{assumption:auction_format}). Therefore, Myerson’s theory cannot be directly applied to the auctions in our setting. However, in our analysis, we will construct auxiliary auctions that enable us to apply Myerson’s lemma in its original form.
\end{proof}

\section{The quantile strategy space}
In this section, we introduce the quantile strategy space, which will be used by our meta-algorithm. We note that our quantile strategy space is a variant of the one\footnote{To be precise, this refers to the decision space of the concave formulation in~\citet[Section 3]{KSS24}, and a related formulation appears in~\citet{KM25}.} used in~\citet{KSS24}, and more generally, quantile-based representations have been widely studied in the auction literature (see e.g.,~\citet[Chapter 3]{Hartline13} and the references therein).

\subsection{Monotone bidding strategies}\label{section:monotone_bidding_strategies}
To motivate the notion of quantile strategies, we first introduce \emph{monotone bidding strategies}.

\begin{definition}\label{def:monotone_bidding_strategy}
We say that a bidding strategy $s:[0,1]\to B$ is monotone if $s(v)\le s(v')$ holds for all $v,v'\in[0,1]$ such that $v\le v'$. We let $\S^{\dagger}$ denote the space of monotone bidding strategies, i.e., $\S^{\dagger}:=\{s:[0,1]\to B\mid s\textrm{ is monotone}\}$.
\end{definition}

For any $i\in[n]$, given a sequence of utility functions $u_i^{(t)}$ and bidder $i$'s bidding strategies $s_i^{(t)}$, we define the regret with respect to the best monotone bidding strategy in hindsight as follows,
\begin{equation}\label{eq:regret_dagger}
    \Reg_i^{\dagger}:=\max_{s\in\S^{\dagger}} \sum_{t\in[T]} \big(u^{(t)}_{i}(s) - u^{(t)}_{i}(s_i^{(t)})\big).
\end{equation}
In Lemma~\ref{lem:monotone_strategies_suffice}, we show that $\Reg_i^{\dagger}$ is equal to the standard regret $\Reg_i$ defined in Eq.~\eqref{eq:regret}, and thus, achieving no regret with respect to the best monotone bidding strategy in hindsight is sufficient to guarantee no regret in general.
\begin{lemma}\label{lem:monotone_strategies_suffice}
Under Assumption~\ref{assumption:auction_format}-\ref{assumption:allocation_monotonicity}, for all $i\in[n]$, it holds that $\Reg_i^{\dagger}=\Reg_i$.
\end{lemma}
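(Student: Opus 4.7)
The plan: $\Reg_i^{\dagger}\le \Reg_i$ is immediate from $\S^{\dagger}\subseteq\S$, so all the content is in the reverse direction, which amounts to showing that for every $s\in\S$ there exists a monotone $s^{\dagger}\in\S^{\dagger}$ with $\sum_{t\in[T]} u_i^{(t)}(s^{\dagger})\ge \sum_{t\in[T]} u_i^{(t)}(s)$. I would construct $s^{\dagger}$ as a pointwise best response to the $T$ realized opponent-bid profiles and then invoke allocation monotonicity to perform a Topkis--style monotone-selection argument.

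Concretely, write the aggregated allocation and payment of the constant bid $b\in B$ as
\[
    X(b):=\sum_{t\in[T]} x_i^{(t)}\bigl(b,\b_{-i}^{(t)}\bigr),\qquad P(b):=\sum_{t\in[T]} p_i^{(t)}\bigl(b,\b_{-i}^{(t)}\bigr),
\]
and set $U(v,b):=v\cdot X(b)-P(b)$. By \eqref{eq:utility_function} and linearity of expectation, $\sum_{t\in[T]} u_i^{(t)}(s)=\E_{v\sim\D_i}[U(v,s(v))]$ for every $s\in\S$. Since $B$ is finite, $\argmax_{b\in B} U(v,b)$ is nonempty for each $v$, and I would define
\[
    s^{\dagger}(v)\;:=\;\min\bigl\{b\in B: U(v,b)=\max\nolimits_{b'\in B} U(v,b')\bigr\}.
\]
By construction $\E_v[U(v,s^{\dagger}(v))]=\E_v[\max_{b} U(v,b)]\ge \E_v[U(v,s(v))]$, so the entire claim reduces to checking that $s^{\dagger}\in\S^{\dagger}$, i.e., that this minimal-maximizer selection is non-decreasing in $v$.

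The monotonicity of $s^{\dagger}$ is the step I expect to be the main (and really only nontrivial) obstacle. Assumption~\ref{assumption:auction_format}-\ref{assumption:allocation_monotonicity} applied round by round makes $X$ non-decreasing in $b$, so for $b'<b$ in $B$ the gap $U(v,b)-U(v,b')=v\bigl(X(b)-X(b')\bigr)-\bigl(P(b)-P(b')\bigr)$ is non-decreasing in $v$, i.e., $U$ has increasing differences in $(v,b)$. Suppose for contradiction that $v<v'$ yet $b:=s^{\dagger}(v)>s^{\dagger}(v')=:b'$. Optimality at $v$ gives $v(X(b)-X(b'))\ge P(b)-P(b')$, while optimality at $v'$ gives $v'(X(b)-X(b'))\le P(b)-P(b')$; together with $X(b)\ge X(b')$ and $v<v'$ these pinch to $X(b)=X(b')$ and $P(b)=P(b')$. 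Then $U(v,b')=U(v,b)$, so $b'<b$ is also a maximizer of $U(v,\cdot)$, contradicting the choice of $s^{\dagger}(v)$ as the \emph{smallest} maximizer. The subtle point, and the reason this argument needs care rather than a generic Topkis citation, is precisely this borderline tie: selecting the largest (or an arbitrary) maximizer would break the implication, whereas the smallest-maximizer convention makes the contradiction go through.
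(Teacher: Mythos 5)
Your proof is correct and takes essentially the same route as the paper: define a pointwise argmax selection $s^{\dagger}$ for the aggregated objective $U(v,b)=v\cdot X(b)-P(b)$, exploit the increasing differences coming from Assumption~\ref{assumption:auction_format}-\ref{assumption:allocation_monotonicity}, and use an extremal tie-breaking convention to kill the borderline case. The only difference is bookkeeping: the paper picks the \emph{largest} maximizer and splits the contradiction into two cases (equal allocation with different payment, versus strictly different allocation), while you pick the \emph{smallest} maximizer and pinch $X(b)=X(b')$ and $P(b)=P(b')$ in a single chain before invoking the tie-breaking; these are the same argument up to reorganization.
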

To prove Lemma~\ref{lem:monotone_strategies_suffice}, it suffices to show that there exists an optimal bidding strategy that is monotone. Intuitively, when a bidder raises the bid, the resulting change in payment is the same regardless of the bidder's value type. However, the corresponding increase in allocation probability benefits higher value types more than lower ones. Consequently, a higher value type should always bid at least as much as a lower value type to maximize utility. The formal proof is provided in Section~\ref{section:proof_of_lemma_monotone_strategies_suffice}.

\subsection{Quantile strategies and quantile utility functions}\label{section:quantile_space_formulation}
Now we introduce quantile strategies and reformulate each bidder's online optimization problem on the quantile strategy space. Formally, our quantile strategy space is simply a probability simplex.

\begin{definition}\label{def:quantile_strategy}
The quantile strategy space is $\Delta([K+1])$, and any vector $\bm{\pi}\in\Delta([K+1])$ is called a quantile strategy.
\end{definition}

For each bidder $i\in[n]$, we define a corresponding monotone bidding strategy $s_{i,\bm{\pi}}\in\S^{\dagger}$ for each quantile strategy $\bm{\pi}\in\Delta([K+1])$. Conceptually, the monotone bidding strategy $s_{i,\bm{\pi}}$ partitions the value space $[0,1]$ into $K+1$ intervals, where the $j$-th interval corresponds to the bid $\frac{j-1}{K}$ and has probability mass $\pi_j$ under bidder $i$'s CDF.
\begin{definition}\label{def:phi}
For each $i\in[n]$, for any $\bm{\pi}\in\Delta([K+1])$, we define $s_{i,\bm{\pi}}\in\S^{\dagger}$ as follows,
\begin{equation}\label{eq:s_i_pi}
    s_{i,\bm{\pi}}(v)=
    \begin{cases}
    0 & \textrm{if } v\in[0,F_i^{-1}(\pi_1)]  \\
    \frac{j-1}{K} & \textrm{if } v\in(F_i^{-1}(\sum_{\ell=1}^{j-1}\pi_{\ell}),F_i^{-1}(\sum_{\ell=1}^{j}\pi_{\ell})] \textrm{ for } j\in\{2,\dots,K\} \\
    1 & \textrm{if } v\in(F_i^{-1}(\sum_{\ell=1}^{K}\pi_{\ell}),1]
    \end{cases},
\end{equation}
where $F_i$ is the CDF of bidder $i$'s value distribution $\D_i$.
\end{definition}

If a bidder $i\in[n]$ chooses a quantile strategy $\bm{\pi}$ at round $t\in[T]$ of the repeated auction game and bids according to the monotone bidding strategy $s_{i,\bm{\pi}}$, then the resulting utility is\footnote{We note that $q_i^{(t)}$ is independent of $\pi_{K+1}$, since $s_{i,\bm{\pi}}$ does not depend on $\pi_{K+1}$ by Eq.~\eqref{eq:s_i_pi}.}
\begin{equation}\label{eq:quantile_utility_function}
    q_i^{(t)}(\bm{\pi}):=u_i^{(t)}(s_{i,\bm{\pi}}),
\end{equation}
where $u_i^{(t)}$ is defined in Eq.~\eqref{eq:utility_function}. We call $q_i^{(t)}$ the \emph{quantile utility function}. In Lemma~\ref{lem:regret_quantile_strategy}, we reformulate bidder $i$'s regret on the quantile strategy space, using quantile utility functions.

\begin{lemma}\label{lem:regret_quantile_strategy}
For any $i\in[n]$, suppose that bidder $i$ uses the bidding strategy $s_{i,\bm{\pi}^{(t)}}$ corresponding to a quantile strategy $\bm{\pi}^{(t)}\in\Delta([K+1])$ at each round $t\in[T]$. Then, under Assumption~\ref{assumption:auction_format}-\ref{assumption:allocation_monotonicity}, bidder $i$'s regret, as defined in~\eqref{eq:regret}, can be equivalently expressed as
\begin{equation}\label{eq:regret_quantile_strategy}
    \Reg_i=\max_{\bm{\pi}\in\Delta([K+1])} \sum_{t\in[T]}\big(q_i^{(t)}(\bm{\pi})-q_i^{(t)}(\bm{\pi}^{(t)})\big).
\end{equation}
\end{lemma}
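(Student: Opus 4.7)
\medskip

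\noindent\textbf{Proof proposal.} The plan is to reduce to the monotone case via Lemma~\ref{lem:monotone_strategies_suffice}, and then identify $\S^{\dagger}$ with $\Delta([K+1])$ in a utility-preserving way modulo $\D_i$-null sets. Since, by hypothesis, $s_i^{(t)}=s_{i,\bm{\pi}^{(t)}}$, the definition $q_i^{(t)}(\bm{\pi}):=u_i^{(t)}(s_{i,\bm{\pi}})$ immediately gives $u_i^{(t)}(s_i^{(t)})=q_i^{(t)}(\bm{\pi}^{(t)})$. Combined with Lemma~\ref{lem:monotone_strategies_suffice}, which under Assumption~\ref{assumption:auction_format}-\ref{assumption:allocation_monotonicity} identifies $\Reg_i$ with $\Reg_i^{\dagger}$, it suffices to prove that
\[
    \max_{s\in\S^{\dagger}}\sum_{t\in[T]} u_i^{(t)}(s) \;=\; \max_{\bm{\pi}\in\Delta([K+1])}\sum_{t\in[T]} q_i^{(t)}(\bm{\pi}).
\]

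For the ``$\ge$'' direction, I would just observe that for any $\bm{\pi}\in\Delta([K+1])$, the strategy $s_{i,\bm{\pi}}$ defined by Eq.~\eqref{eq:s_i_pi} is non-decreasing in $v$ (the thresholds $F_i^{-1}(\sum_{\ell\le j}\pi_\ell)$ are non-decreasing in $j$), so $s_{i,\bm{\pi}}\in\S^{\dagger}$ and $u_i^{(t)}(s_{i,\bm{\pi}})=q_i^{(t)}(\bm{\pi})$ by definition. This realizes every value of the right-hand side as a value of the left-hand side.

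For the ``$\le$'' direction, given any $s\in\S^{\dagger}$ I would define $\pi_j:=\Pr_{v\sim\D_i}[s(v)=\tfrac{j-1}{K}]$ for $j\in[K+1]$. These are non-negative and sum to one, so $\bm{\pi}\in\Delta([K+1])$. The key claim is that $s$ and $s_{i,\bm{\pi}}$ agree $\D_i$-almost everywhere, from which $u_i^{(t)}(s)=u_i^{(t)}(s_{i,\bm{\pi}})=q_i^{(t)}(\bm{\pi})$ follows because the utility in Eq.~\eqref{eq:utility_function} is an expectation under $v_i\sim\D_i$. To establish the claim, note that since $s$ is monotone and $B$-valued, the preimages $\{v:s(v)=\tfrac{j-1}{K}\}$ form consecutive intervals partitioning $[0,1]$, and the cumulative $\D_i$-mass of the first $j$ such intervals equals $\sum_{\ell\le j}\pi_\ell$. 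These cumulative masses coincide with the threshold equations $F_i(\,\cdot\,)=\sum_{\ell\le j}\pi_\ell$ used to construct $s_{i,\bm{\pi}}$, so the two partitions can differ only at the interval endpoints. Because $\D_i$ is continuous, any countable set of points (in particular, these at most $K$ boundaries) is $\D_i$-null, yielding $s=s_{i,\bm{\pi}}$ almost surely under $\D_i$.

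The main technical obstacle is the a.s.-equivalence argument in the ``$\le$'' direction, since $F_i$ may fail to be strictly increasing, making $F_i^{-1}$ a generalized (quantile) inverse and introducing ambiguity at the thresholds. The way around it is precisely the continuity of $\D_i$: flat regions of $F_i$ correspond to $\D_i$-null intervals, and the at-most-$K$ boundary points between bid levels are also $\D_i$-null, so the ambiguity is invisible to the utility. Everything else is a routine bookkeeping step, and combining the two inequalities with the identity $u_i^{(t)}(s_i^{(t)})=q_i^{(t)}(\bm{\pi}^{(t)})$ and Lemma~\ref{lem:monotone_strategies_suffice} finishes the proof.
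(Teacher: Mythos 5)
Your proof is correct and follows essentially the same approach as the paper's: both reduce to $\S^{\dagger}$ via Lemma~\ref{lem:monotone_strategies_suffice}, obtain one inclusion immediately from $s_{i,\bm{\pi}}\in\S^{\dagger}$, and for the other direction map a monotone $s$ to the quantile strategy with coordinates $\pi_j=\Pr_{v\sim\D_i}[s(v)=\tfrac{j-1}{K}]$ (the paper records the same quantities as $F_i(v_j^*)-F_i(v_{j-1}^*)$ for threshold values $v_j^*$) and conclude $u_i^{(t)}(s)=q_i^{(t)}(\bm{\pi})$ because $s$ and $s_{i,\bm{\pi}}$ agree up to a $\D_i$-null set. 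Your remarks on flat regions of $F_i$ and the quantile inverse correctly address the only subtlety, which the paper handles implicitly through the integral decomposition over $(v_{j-1}^*,v_j^*]$.
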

Essentially, Lemma \ref{lem:regret_quantile_strategy} follows from Lemma \ref{lem:monotone_strategies_suffice} because a monotone bidding strategy partitions the quantile space into consecutive intervals corresponding to increasing bids, which induces a natural correspondence between monotone bidding strategies and quantile strategies. We defer the formal proof to Section \ref{section:proof_of_lemma_regret_quantile_strategy}.

\subsection{Concavity of quantile utility functions}
By Lemma~\ref{lem:regret_quantile_strategy}, bidders can minimize their regret on the quantile strategy space, as given in Eq.~\eqref{eq:regret_quantile_strategy}, to guarantee no regret in the original sense. In Lemma~\ref{lem:quantile_utility_function_is_concave}, we show that the quantile utility function $q_i^{(t)}$ is concave, which will enable us to apply standard techniques from online convex optimization. Along the way, we also compute the gradient of $q_i^{(t)}$, which will be useful for our analysis later.

\begin{lemma}\label{lem:quantile_utility_function_is_concave}
For any $i\in[n]$ and $t\in[T]$, the quantile utility function $q_i^{(t)}:\Delta([K+1])\to[-1,1]$, as defined in Eq.~\eqref{eq:quantile_utility_function}, is concave. Moreover, we have that $\frac{\partial q_i^{(t)}}{\partial \pi_{K+1}}=0$, and for all $k\in[K]$, the partial derivative of $q_i^{(t)}(\bm{\pi})$ with respect to $\pi_k$ can be expressed as
\begin{equation}\label{eq:gradient_of_quantile_utility_function}
\frac{\partial q_i^{(t)}}{\partial \pi_k}=\Big(\sum_{j=k}^{K}\big(x_i^{(t)}\big(\tfrac{j-1}{K},\b_{-i}^{(t)}\big)-x_i^{(t)}\big(\tfrac{j}{K},\b_{-i}^{(t)}\big)\big)\cdot F_i^{-1}\big(\sum_{\ell=1}^j\pi_{\ell}\big)\Big)+p_i^{(t)}\big(1,\b_{-i}^{(t)}\big)-p^{(t)}_i\big(\tfrac{k-1}{K},\b_{-i}^{(t)}\big),\,\forall\,k\in[K].
\end{equation}
Furthermore, for all $k\in[K+1]$, we have that $\frac{\partial q_i^{(t)}}{\partial \pi_k}\in[-2,1]$ under Assumption~\ref{assumption:auction_format}-\ref{assumption:allocation_monotonicity}.
\end{lemma}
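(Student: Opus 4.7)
My plan is to write $q_i^{(t)}(\bm{\pi})$ in closed form, apply summation by parts to separate an affine piece from a manifestly concave piece, and then differentiate coordinate-wise.

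First, set $X_j := x_i^{(t)}(\tfrac{j-1}{K},\b_{-i}^{(t)})$ and $P_j := p_i^{(t)}(\tfrac{j-1}{K},\b_{-i}^{(t)})$ for $j \in [K+1]$, and write $s_j := \sum_{\ell=1}^{j}\pi_\ell$ with $s_0 := 0$. By Definition~\ref{def:phi}, under $s_{i,\bm{\pi}}$ the bidder bids $\tfrac{j-1}{K}$ precisely on the value interval $(F_i^{-1}(s_{j-1}),F_i^{-1}(s_j)]$, which carries $\D_i$-mass $\pi_j$. Packaging the value-weighted integrals as $G_i(q):=\int_0^{F_i^{-1}(q)} v\,dF_i(v)$, the expectation in Eq.~\eqref{eq:quantile_utility_function} expands to $\sum_{j=1}^{K+1} X_j(G_i(s_j)-G_i(s_{j-1}))-\sum_{j=1}^{K+1} P_j\pi_j$. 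A one-line Abel rearrangement, using $G_i(s_0)=0$, rewrites this as
\[
q_i^{(t)}(\bm{\pi})=X_{K+1}\,G_i(s_{K+1})+\sum_{j=1}^{K}(X_j-X_{j+1})\,G_i(s_j)-\sum_{j=1}^{K+1} P_j\pi_j.
\]

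On $\Delta([K+1])$ the cumulative sum $s_{K+1}=1$, so the leading term is a constant and the payment term is linear. Since $\D_i$ is continuous, the chain rule yields $G_i'(q)=F_i^{-1}(q)$, which is non-decreasing; hence $G_i$ is convex, and each $G_i(s_j)$ is convex in $\bm{\pi}$ because $s_j$ is linear. Allocation monotonicity (Assumption~\ref{assumption:auction_format}-\ref{assumption:allocation_monotonicity}) forces $X_j-X_{j+1}\le 0$, so every summand $(X_j-X_{j+1})G_i(s_j)$ is concave, establishing concavity of $q_i^{(t)}$. For the gradient, differentiate the closed-form expression coordinate-wise as a function on $\R^{K+1}$ (so that $s_{K+1}$ is treated as a free variable during differentiation) using $\partial s_j/\partial \pi_k = \mathbf{1}[k\le j]$, then evaluate at a point on the simplex (where $F_i^{-1}(s_{K+1})=F_i^{-1}(1)=1$); this produces the stated formula~\eqref{eq:gradient_of_quantile_utility_function}. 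The upper bound $\partial q_i^{(t)}/\partial \pi_k\le 1$ follows from $X_{K+1}\le 1$, $P_k\ge 0$, and $(X_j-X_{j+1})F_i^{-1}(s_j)\le 0$. For the lower bound, apply the telescoping identity $X_{K+1}=X_k+\sum_{j=k}^{K}(X_{j+1}-X_j)$ to rewrite the derivative as $X_k-P_k+\sum_{j=k}^{K}(X_{j+1}-X_j)\bigl(1-F_i^{-1}(s_j)\bigr)$; every summand is non-negative since $X_{j+1}\ge X_j$ and $F_i^{-1}(s_j)\le 1$, so the derivative is at least $X_k-P_k\ge -1$.

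The main subtlety is the appearance of $X_{K+1}$ in the gradient formula: on the simplex $X_{K+1}G_i(s_{K+1})$ is constant, and a careless partial derivative would drop it. One must either extend $q_i^{(t)}$ to a neighborhood in $\R^{K+1}$ before differentiating, or equivalently pick the representative of the (constrained) gradient for which the stated identity holds---the gradient is only determined modulo the direction $(1,\dots,1)$ normal to the simplex. The only other analytic point that warrants care is the identity $G_i'=F_i^{-1}$, which follows from the inverse-function relation $F_i(F_i^{-1}(q))=q$ under the paper's continuity assumption on $\D_i$.
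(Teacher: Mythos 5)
Your proposal is correct and takes essentially the same route as the paper: both rewrite the utility in quantile coordinates via the prefix sums $s_j$, both exploit $F_i^{-1}$ being non-decreasing together with allocation monotonicity ($X_j-X_{j+1}\le 0$) for concavity, and both obtain the gradient formula by chain rule plus telescoping. The only organizational difference is that you apply Abel summation to expose the concave decomposition $\sum_j (X_j-X_{j+1})G_i(s_j)$ directly (and explicitly note that the $X_{K+1}G_i(s_{K+1})$ term is constant on the simplex), whereas the paper introduces the intermediate separable function $\sigma_i^{(t)}$ and argues concavity coordinate-by-coordinate; both rest on identical facts.
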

The proof is provided in Section~\ref{section:proof_of_lemma_quantile_utility_function_is_concave}.

\section{From no-regret to strategically robust learning}
In this section, we present our meta-algorithm (Algorithm~\ref{alg:meta}) that converts any standard no-regret learning algorithm into a strategically robust algorithm for our repeated auction game. We begin by formalizing no-regret learning algorithms in the standard online learning setting.
\begin{definition}\label{def:no_regret_learning_algorithm}
In the standard online learning setting, an algorithm $\A$ operates over $T$ rounds. At each round $t\in[T]$, $\A$ chooses a distribution $\bm{\pi}^{(t)}\in\Delta([K+1])$ over $K+1$ possible actions. Then, a reward vector $\r^{(t)}\in[-2,2]^{K+1}$ is revealed, and $\A$ receives the expected utility $\sum_{k\in[K+1]}\pi_k^{(t)}r_k^{(t)}$. The algorithm's regret, which we denote by $\Reg_{\A}$, is the difference between the cumulative reward of the best fixed action in hindsight and the expected cumulative reward of $\A$, i.e.,
\[
    \Reg_{\A}:=\max_{j\in[K+1]} \sum_{t\in[T]} \Big(r^{(t)}_{j} - \sum_{k\in[K+1]}\pi_k^{(t)}r_k^{(t)}\Big).
\]
We say that $\A$ is no-regret if it guarantees that $\Reg_{\A}=o(T)$ for any sequence of reward vectors.
\end{definition}

Algorithm~\ref{alg:meta} runs a no-regret learning algorithm $\A$ over the quantile strategy space. Specifically, at each round $t\in[T]$, Algorithm~\ref{alg:meta} feeds the gradient of the quantile utility function as the reward vector to algorithm $\A$ and uses the distribution returned by $\A$ as a quantile strategy.

\begin{algorithm}[ht]
\SetAlgoLined
\SetKwInOut{Input}{Input}
\SetKwInOut{Output}{Output}
\Input{Online learning algorithm $\A$, and the CDF $F_i$ of bidder $i$'s value distribution $\D_i$}
\SetAlgorithmName{Algorithm}~~
    Initialize algorithm $\A$ with $K+1$ possible actions\;
    \For{$t=1,\dots,T$}{
        Algorithm $\A$ outputs a distribution $\bm{\pi}^{(i,t)}\in\Delta([K+1])$\;
        Let $s_i^{(t)}$ be the monotone bidding strategy $s_{i,\bm{\pi}^{(i,t)}}$ (as defined by Eq.~\eqref{eq:s_i_pi})\;
        Bidder $i$ draws their value $v_i^{(t)}$ from $\D_i$ and bids $b_i^{(t)}=s_i^{(t)}(v_i^{(t)})$\;
        Observe $x_i^{(t)}\big(b,\b_{-i}^{(t)}\big)$ and $p_i^{(t)}\big(b,\b_{-i}^{(t)}\big)$ for all $b\in B$, which (together with $F_i$) are sufficient to compute the gradient $\nabla q_i^{(t)}(\bm{\pi}^{(i,t)})$ (see Lemma~\ref{lem:quantile_utility_function_is_concave})\;
        Feed the gradient $\nabla q_i^{(t)}(\bm{\pi}^{(i,t)})$ as the reward vector $\r^{(t)}$ at round $t$ to algorithm $\A$\;
    }
    \caption{\textsc{Meta-Algorithm (from Bidder $i$'s Perspective)}}
    \label{alg:meta}
\end{algorithm}

Before presenting our main result, we first observe that Algorithm~\ref{alg:meta} guarantees no regret.

\begin{proposition}\label{prop:meta_no_regret}
For any $i\in[n]$, if the input algorithm $\A$ has regret at most $\Reg_{\A}$ in the standard online learning setting, then under Assumption~\ref{assumption:auction_format}-\ref{assumption:allocation_monotonicity}, Algorithm~\ref{alg:meta} guarantees that bidder $i$’s regret $\Reg_i$ (as defined in Eq.~\eqref{eq:regret}) in the repeated auction game is at most $\Reg_{\A}$.
\end{proposition}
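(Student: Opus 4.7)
The plan is to chain together three ingredients already available in the excerpt: the quantile-space reformulation of the bidder's regret (Lemma~\ref{lem:regret_quantile_strategy}), the concavity of the quantile utility function (Lemma~\ref{lem:quantile_utility_function_is_concave}), and the definition of $\Reg_{\A}$ for the base learner. There is essentially no hidden difficulty; the proof is a short calculation once we linearize via concavity.

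First, since bidder $i$ uses the monotone strategy $s_{i,\bm{\pi}^{(i,t)}}$ at each round $t$, Lemma~\ref{lem:regret_quantile_strategy} gives
\[
\Reg_i = \max_{\bm{\pi}\in\Delta([K+1])}\sum_{t\in[T]}\bigl(q_i^{(t)}(\bm{\pi})-q_i^{(t)}(\bm{\pi}^{(i,t)})\bigr).
\]
Next I would invoke the concavity part of Lemma~\ref{lem:quantile_utility_function_is_concave}: for every $\bm{\pi}\in\Delta([K+1])$ and every $t$,
\[
q_i^{(t)}(\bm{\pi})-q_i^{(t)}(\bm{\pi}^{(i,t)}) \;\le\; \nabla q_i^{(t)}(\bm{\pi}^{(i,t)})^{\top}(\bm{\pi}-\bm{\pi}^{(i,t)}) \;=\; (\r^{(t)})^{\top}\bm{\pi} - (\r^{(t)})^{\top}\bm{\pi}^{(i,t)},
\]
where $\r^{(t)}:=\nabla q_i^{(t)}(\bm{\pi}^{(i,t)})$ is exactly the reward vector that Algorithm~\ref{alg:meta} feeds to $\A$. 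Summing over $t$ and maximizing over $\bm{\pi}\in\Delta([K+1])$, the upper bound becomes
\[
\max_{\bm{\pi}\in\Delta([K+1])}\sum_{t\in[T]}(\r^{(t)})^{\top}\bm{\pi} \;-\; \sum_{t\in[T]}(\r^{(t)})^{\top}\bm{\pi}^{(i,t)}.
\]

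To match this with $\Reg_{\A}$ as defined in Definition~\ref{def:no_regret_learning_algorithm}, I would use the elementary fact that a linear functional on a probability simplex attains its maximum at a vertex, so
\[
\max_{\bm{\pi}\in\Delta([K+1])}\sum_{t\in[T]}(\r^{(t)})^{\top}\bm{\pi} \;=\; \max_{j\in[K+1]}\sum_{t\in[T]}r_j^{(t)},
\]
and therefore the displayed upper bound equals precisely $\Reg_{\A}$. Finally, I need to check that Algorithm~\ref{alg:meta} is feeding $\A$ an admissible sequence of reward vectors, i.e., that each $\r^{(t)}\in[-1,1]^{K+1}$; this is exactly the componentwise bound on $\partial q_i^{(t)}/\partial\pi_k$ established in Lemma~\ref{lem:quantile_utility_function_is_concave} under Assumption~\ref{assumption:auction_format}-\ref{assumption:allocation_monotonicity}. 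Combining everything yields $\Reg_i\le\Reg_{\A}$, as claimed.

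The ``hard'' step, if any, is simply being careful that the concavity inequality goes in the right direction (we want to upper bound $\Reg_i$, which is a maximum of differences, so concavity is used to replace $q_i^{(t)}(\bm{\pi})-q_i^{(t)}(\bm{\pi}^{(i,t)})$ by its first-order linearization from above). Everything else is bookkeeping: verifying the reward vectors lie in $[-1,1]^{K+1}$ so that $\A$'s regret guarantee applies, and translating the simplex-maximum into the vertex-maximum used in Definition~\ref{def:no_regret_learning_algorithm}.
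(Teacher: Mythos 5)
Your proof is correct and follows essentially the same route as the paper: start from the quantile-space expression of $\Reg_i$ (Lemma~\ref{lem:regret_quantile_strategy}), linearize via concavity (Lemma~\ref{lem:quantile_utility_function_is_concave}), and invoke the base learner's regret guarantee after verifying the reward vectors lie in $[-1,1]^{K+1}$. The only cosmetic difference is that you make explicit the elementary fact that a linear functional on the simplex is maximized at a vertex, a step the paper uses implicitly when passing between $\Reg_{\A}$ (a max over vertices in Definition~\ref{def:no_regret_learning_algorithm}) and a max over the whole simplex.
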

\begin{proof}
By Lemma~\ref{lem:quantile_utility_function_is_concave}, the reward vectors $\nabla q_i^{(t)}(\bm{\pi}^{(i,t)})$ provided to the input algorithm $\A$ are in the range $[-2,1]^{K+1}$, which is consistent with the standard online learning setting (Definition~\ref{def:no_regret_learning_algorithm}). Hence, by our assumption in the lemma statement, algorithm $\A$ guarantees that
\begin{align*}
    \Reg_{\A}&\ge\max_{\bm{\pi}\in\Delta([K+1])}\sum_{t\in[T]}\nabla q_i^{(t)}(\bm{\pi}^{(i,t)})^{\top} (\bm{\pi}-\bm{\pi}^{(i,t)})\\
    &\ge\max_{\bm{\pi}\in\Delta([K+1])}\sum_{t\in[T]} q_i^{(t)}(\bm{\pi})-q_i^{(t)}(\bm{\pi}^{(i,t)}) &&\text{(Since $q_i^{(t)}$ is concave by Lemma~\ref{lem:quantile_utility_function_is_concave})}\\
    &=\Reg_i &&\text{(By Lemma~\ref{lem:regret_quantile_strategy})},
\end{align*}
which finishes the proof.
\end{proof}

Now we state our main result, which shows that Algorithm~\ref{alg:meta} is strategically robust.
\begin{theorem}\label{thm:meta_strategic_robustness}
For any joint prior distribution $\D$ of bidders’ values, if each bidder $i\in[n]$ adopts Algorithm~\ref{alg:meta} with an input algorithm $\A_i$ whose regret in the standard online learning setting is at most $\Reg_{\A_i}$, then under Assumption~\ref{assumption:auction_format}, the auctioneer’s expected cumulative revenue is at most $\Mye(\D)\cdot T + \sum_{i\in[n]}\Reg_{\A_i}$.
\end{theorem}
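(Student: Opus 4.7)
I will follow the blueprint in Section~\ref{section:technical_overview}: for each round $t \in [T]$ I construct an IC and IR auxiliary auction $\tilde{\M}^{(t)}$ (with respect to $\D$) whose revenue differs from the auctioneer's actual round-$t$ revenue by exactly the gradient term $\sum_i \nabla q_i^{(t)}(\bm{\pi}^{(i,t)})^\top(\e_1 - \bm{\pi}^{(i,t)})$, where $\e_1 \in \Delta([K+1])$ denotes the ``always bid zero'' quantile strategy. Summing the per-round identity over $t$, bounding $\Rev_{\tilde\M^{(t)}}(\D) \le \Mye(\D)$ via Myerson's theory, and applying the no-regret guarantee of each $\A_i$ against the fixed benchmark $\e_1$ will then give the theorem.

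\textbf{Constructing $\tilde\M^{(t)}$.} I define the auxiliary auction by composing each bidder's monotone strategy $s_{i,\bm{\pi}^{(i,t)}}$ (Eq.~\eqref{eq:s_i_pi}) with the original round-$t$ auction's allocation and then replacing the payment by Myerson's IC payment. Concretely, given a reported value profile $\tilde\v$, set
\[
\tilde x^{(t)}_i(\tilde\v) := x^{(t)}_i\bigl(s_{i,\bm{\pi}^{(i,t)}}(\tilde v_i),\, \bigl(s_{j,\bm{\pi}^{(j,t)}}(\tilde v_j)\bigr)_{j \neq i}\bigr),
\]
and let $\tilde p^{(t)}_i$ be the Myerson payment derived from $\tilde x^{(t)}_i$ with $\tilde p^{(t)}_i(0,\tilde\v_{-i}) = 0$. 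Because each $s_{i,\bm{\pi}^{(i,t)}}$ is non-decreasing (Definition~\ref{def:monotone_bidding_strategy}) and $x^{(t)}_i$ is non-decreasing in its first argument by Assumption~\ref{assumption:auction_format}-\ref{assumption:allocation_monotonicity}, $\tilde x^{(t)}_i$ is monotone in $\tilde v_i$. Myerson's lemma (Lemma~\ref{lem:myersons_lemma}) then certifies IC, and the zero-at-zero convention gives IR, so $\Rev_{\tilde\M^{(t)}}(\D) \le \Mye(\D)$.

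\textbf{The per-round identity.} The heart of the proof is
\[
\E_{\v\sim\D}\Big[\sum_i p^{(t)}_i(\b^{(t)})\Big] - \Rev_{\tilde\M^{(t)}}(\D) = \sum_i \nabla q_i^{(t)}(\bm{\pi}^{(i,t)})^\top(\e_1 - \bm{\pi}^{(i,t)}).
\]
I would prove it per bidder $i$ after conditioning on $\b^{(t)}_{-i}$. The expected actual payment from bidder $i$ is $\sum_k \pi^{(i,t)}_k p^{(t)}_i(\tfrac{k-1}{K}, \b^{(t)}_{-i})$. A direct computation from Myerson's payment formula $\tilde p^{(t)}_i(\tilde v_i,\b^{(t)}_{-i}) = \tilde x^{(t)}_i \tilde v_i - \int_0^{\tilde v_i}\tilde x^{(t)}_i(z,\b^{(t)}_{-i})\,dz$ shows that the expected auxiliary payment from bidder $i$ equals $\sum_{j=1}^K \Delta X_j^{(i,t)} \cdot y_j^{(i,t)}\bigl(1-F_i(y_j^{(i,t)})\bigr)$, where $\Delta X_j^{(i,t)} := x^{(t)}_i(\tfrac{j}{K},\b^{(t)}_{-i}) - x^{(t)}_i(\tfrac{j-1}{K},\b^{(t)}_{-i})$ and $y_j^{(i,t)} := F_i^{-1}(\sum_{\ell \le j}\pi_\ell^{(i,t)})$. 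Computing the right-hand side via Eq.~\eqref{eq:gradient_of_quantile_utility_function} and performing Abel summation on the alternating allocation increments (and using voluntary participation to eliminate $p^{(t)}_i(0,\cdot)$) recovers the same expression up to the actual payment term, yielding the identity.

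\textbf{Conclusion and main obstacle.} Lemma~\ref{lem:quantile_utility_function_is_concave} ensures that the reward vector $\r^{(t)} = \nabla q_i^{(t)}(\bm{\pi}^{(i,t)})$ lies in $[-1,1]^{K+1}$, so the regret guarantee of $\A_i$ applied to the fixed benchmark $\e_1 \in \Delta([K+1])$ (Definition~\ref{def:no_regret_learning_algorithm}) yields $\sum_t \nabla q_i^{(t)}(\bm{\pi}^{(i,t)})^\top(\e_1 - \bm{\pi}^{(i,t)}) \le \Reg_{\A_i}$. Summing the per-round identity over $t\in[T]$, taking expectations, and combining with Myerson's bound gives $\E[\sum_{t,i} p^{(t)}_i(\b^{(t)})] \le T\cdot \Mye(\D) + \sum_i \Reg_{\A_i}$, as required. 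The main obstacle is the per-round identity: precisely matching the Myerson virtual-value integral $\sum_j \Delta X_j\, y_j(1-F_i(y_j))$ against the $\e_1$-coordinate of the gradient in Eq.~\eqref{eq:gradient_of_quantile_utility_function} minus the Euler-style inner product $\nabla q_i^{(t)}(\bm{\pi}^{(i,t)})^\top \bm{\pi}^{(i,t)}$, which requires careful Abel summation combined with both allocation monotonicity and voluntary participation.
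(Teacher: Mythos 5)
Your proposal is correct and follows essentially the same route as the paper: the auxiliary auction you define coincides with Definition~\ref{def:auxiliary_auction}, your per-round identity is exactly Lemma~\ref{lem:main_lemma} (with the expectation over $\v_{-i}^{(t)}$ on the right-hand side that you implicitly supply by conditioning on $\b_{-i}^{(t)}$, since $\nabla q_i^{(t)}$ depends on the other bidders' bids), and the concluding step—applying the no-regret guarantee of $\A_i$ against the fixed benchmark $\e_1=\e$—is precisely how the paper finishes. Your auxiliary-revenue expression $\sum_j \Delta X_j^{(i,t)}\, y_j^{(i,t)}\bigl(1-F_i(y_j^{(i,t)})\bigr)$ matches Claim~\ref{claim:auxiliary_auction_payment} upon noting $1-F_i(y_j^{(i,t)})=\sum_{k>j}\pi_k^{(i,t)}$, and the Abel-summation step you flag corresponds to the sum exchange and use of $\sum_k\pi_k^{(i,t)}=1$ in the paper's proof of Lemma~\ref{lem:main_lemma}.
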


In the remainder of this section, we prove Theorem~\ref{thm:meta_strategic_robustness}. The proof relies on a key lemma that relates the gradient of the quantile utility function to Myerson’s payment rule for a certain auxiliary auction. We first construct these auxiliary auctions in Section~\ref{section:auxiliary_auctions}, then establish the key lemma in Section~\ref{section:main_lemma}, and finally prove Theorem~\ref{thm:meta_strategic_robustness} in Section~\ref{section:proof_of_main_theorem}.

\subsection{Auxiliary auctions}\label{section:auxiliary_auctions}
For each round $t\in [T]$, we construct an auxiliary auction $\tilde{\M}^{(t)}$ as follows.
\begin{definition}\label{def:auxiliary_auction}
For each round $t\in [T]$, given the auction format $\M^{(t)}=(\x^{(t)},\p^{(t)})$ selected by the auctioneer and bidders' monotone bidding strategies $s_i^{(t)}$ chosen by Algorithm~\ref{alg:meta}, the auxiliary auction $\tilde{\M}^{(t)}:=(\tilde{\x}^{(t)},\tilde{\p}^{(t)})$ is specified by an allocation function $\tilde{\x}^{(t)}:[0,1]^n\to\X([n])$ and a payment function $\tilde{\p}^{(t)}:[0,1]^n\to[0,1]^n$. Specifically, the allocation function $\tilde{\x}^{(t)}$ is defined as
\begin{equation}\label{eq:auxiliary_auction_allocation}
    \tilde{\x}^{(t)}(\v):=\x^{(t)}\big(s_1^{(t)}(v_1),\dots,s_n^{(t)}(v_n)\big),\,\,\forall\,\v\in[0,1]^n,
\end{equation}
and the payment function $\tilde{\p}^{(t)}$ is defined by
\begin{equation}\label{eq:auxiliary_auction_payment}
    \tilde{p}_i^{(t)}(\v):=\tilde{x}_i^{(t)}(v_i,\v_{-i})v_i - \int_{0}^{v_i} \tilde{x}_i^{(t)}(z,\v_{-i})dz,\,\,\forall\,\v\in[0,1]^n,\,\,\forall\,i\in[n].
\end{equation}
\end{definition}

Intuitively, the auxiliary auction $\tilde{\M}^{(t)}$ can be viewed as a variant of $\M^{(t)}$ whose allocation function $\x^{(t)}$ is ``stretched'' by the monotone bidding strategies $s_i^{(t)}$ (see Figure~\ref{fig:illustration_auxiliary_auction} for an illustration).

\begin{figure}[ht]
    \centering
    \begin{subfigure}[t]{0.45\textwidth}
        \centering
        \begin{tikzpicture}[scale=1.2, >=stealth]
            \draw[->] (0,0) -- (4,0) node[right] {$b$};
            \draw[->] (0,0) -- (0,2.5) node[above] {$x(b)$};

            \draw[thick] (0,1) -- (1.333,1);
            \draw[thick] (1.333,2) -- (4,2);
            \draw[dashed] (1.333,1) -- (1.333,2);
            \draw[dashed] (1.333,0) -- (1.333,1);
            \draw[dashed] (0,2) -- (1.333,2);

            \fill (1.333,2) circle (2pt);
            \draw[thick, fill=white] (1.333,1) circle (2pt);
            \draw (0,1) node[left] {$\nicefrac{1}{4}$};
            \draw (0,2) node[left] {$\nicefrac{1}{2}$};
            \draw (1.333,0) node[below] {$\nicefrac{1}{3}$};
        \end{tikzpicture}
        \caption{Allocation function $x$}
        \label{fig:allocation_function_x}
    \end{subfigure}
    \hfill
    \begin{subfigure}[t]{0.45\textwidth}
        \centering
        \begin{tikzpicture}[scale=1.2, >=stealth]
            \draw[->] (0,0) -- (4,0) node[right] {$v$};
            \draw[->] (0,0) -- (0,2.5) node[above] {$\tilde{x}(v)$};

            \draw[thick] (0,1) -- (2,1);
            \draw[thick] (2,2) -- (4,2);
            \draw[dashed] (2,1) -- (2,2);
            \draw[dashed] (2,0) -- (2,1);
            \draw[dashed] (0,2) -- (2,2);

            \fill (2,2) circle (2pt);
            \draw[thick, fill=white] (2,1) circle (2pt);
            \draw (0,1) node[left] {$\nicefrac{1}{4}$};
            \draw (0,2) node[left] {$\nicefrac{1}{2}$};
            \draw (2,0) node[below] {$\nicefrac{1}{2}$};
        \end{tikzpicture}
        \caption{Allocation function $\tilde{x}$}
        \label{fig:allocation_function_x_tilde}
    \end{subfigure}
    \caption{This figure illustrates an example of an auxiliary auction $\tilde{\M}$ in the single-bidder case. On the left is the allocation function $x$ for the bidder in the original auction $\M$. On the right is the allocation function $\tilde{x}$ for the bidder in an auxiliary auction $\tilde{\M}$, where $\tilde{x}(v)=x(s(v))$ is obtained by composing $x$ with a monotone bidding strategy $s(v):=\frac{1}{2}\cdot\mathds{1}(v\ge\frac{1}{2})$.}
    \label{fig:illustration_auxiliary_auction}
\end{figure}
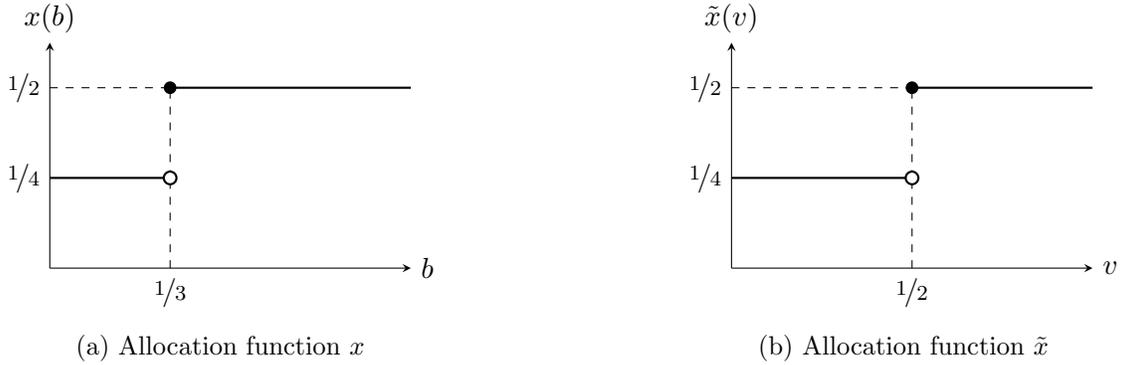

Moreover, we note that $\tilde{\M}^{(t)}$ has a continuous bid space $[0,1]$, which is consistent with Myerson's setting introduced in Section~\ref{section:myersons_auction}. In Claim~\ref{claim:auxiliary_auction_IC_and_IR}, we show that $\tilde{\M}^{(t)}$ is IC and IR.

\begin{claim}\label{claim:auxiliary_auction_IC_and_IR}
Suppose that all bidders adopt Algorithm~\ref{alg:meta} (with arbitrary input algorithms). Then, for each round $t\in [T]$, given any auction format $\M^{(t)}=(\x^{(t)},\p^{(t)})$ (under Assumption~\ref{assumption:auction_format}-\ref{assumption:allocation_monotonicity}) and bidders' monotone bidding strategies $s_i^{(t)}$, the auxiliary auction $\tilde{\M}^{(t)}$ is IC and IR.
\end{claim}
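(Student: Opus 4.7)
The plan is to verify the hypotheses of Myerson's lemma (Lemma~\ref{lem:myersons_lemma}) for $\tilde{\M}^{(t)}$ and then read off individual rationality directly from the payment formula. The whole argument is short: the definition of the auxiliary auction in Eq.~\eqref{eq:auxiliary_auction_payment} is tailored so that it is already in Myerson's canonical form with ``payment at zero bid equal to zero,'' so the only nontrivial thing to check is allocation monotonicity in each bidder's own value.

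First I would verify that for every $i\in[n]$, the map $v_i\mapsto\tilde{x}_i^{(t)}(v_i,\v_{-i})$ is non-decreasing. By Eq.~\eqref{eq:auxiliary_auction_allocation} this map is the composition
\[
v_i\mapsto s_i^{(t)}(v_i)\mapsto x_i^{(t)}\bigl(s_i^{(t)}(v_i),\,s_1^{(t)}(v_1),\ldots,s_n^{(t)}(v_n)\bigr),
\]
where only the $i$-th coordinate of the bid vector depends on $v_i$. Since Algorithm~\ref{alg:meta} uses $s_i^{(t)}=s_{i,\bm{\pi}^{(i,t)}}$, and Definition~\ref{def:phi} makes $s_{i,\bm{\pi}}$ a monotone bidding strategy (i.e., $s_i^{(t)}\in\S^{\dagger}$) by construction, $s_i^{(t)}(v_i)$ is non-decreasing in $v_i$. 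Combined with allocation monotonicity of $\x^{(t)}$ in its $i$-th bid coordinate (Assumption~\ref{assumption:auction_format}-\ref{assumption:allocation_monotonicity}), the composition is non-decreasing, which establishes condition (i) of Myerson's lemma.

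Next I would observe that condition (ii) of Myerson's lemma is built into the definition of $\tilde{\p}^{(t)}$. Setting $v_i=0$ in Eq.~\eqref{eq:auxiliary_auction_payment} gives $\tilde{p}_i^{(t)}(0,\v_{-i})=0$, and therefore Eq.~\eqref{eq:auxiliary_auction_payment} coincides with the Myerson formula
\[
\tilde{p}_i^{(t)}(\v)=\tilde{p}_i^{(t)}(0,\v_{-i})+\tilde{x}_i^{(t)}(v_i,\v_{-i})\,v_i-\int_0^{v_i}\tilde{x}_i^{(t)}(z,\v_{-i})\,dz.
\]
Together with the monotonicity verified above, Myerson's lemma then yields that $\tilde{\M}^{(t)}$ is IC.

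Finally, for IR I would simply plug the truthful bid into the utility: for any $\v\in[0,1]^n$ and any $i\in[n]$,
\[
\tilde{x}_i^{(t)}(v_i,\v_{-i})\,v_i-\tilde{p}_i^{(t)}(v_i,\v_{-i})=\int_0^{v_i}\tilde{x}_i^{(t)}(z,\v_{-i})\,dz\ge 0,
\]
since allocation probabilities are non-negative. I do not expect a real obstacle here: the only place where one must be a bit careful is making explicit that the monotonicity of $s_i^{(t)}$ comes for free from the fact that Algorithm~\ref{alg:meta} plays strategies of the form $s_{i,\bm{\pi}}$, so that no additional property of the underlying no-regret learner $\A_i$ needs to be invoked.
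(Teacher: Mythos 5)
Your proof is correct and follows essentially the same route as the paper: check condition (i) of Myerson's lemma via composition of the monotone bidding strategy $s_i^{(t)}=s_{i,\bm{\pi}^{(i,t)}}$ with the monotone allocation $\x^{(t)}$, note that Eq.~\eqref{eq:auxiliary_auction_payment} is already Myerson's payment formula with $\tilde{p}_i^{(t)}(0,\v_{-i})=0$, and read off IR from $\tilde{x}_i^{(t)}(v_i,\v_{-i})v_i-\tilde{p}_i^{(t)}(v_i,\v_{-i})=\int_0^{v_i}\tilde{x}_i^{(t)}(z,\v_{-i})\,dz\ge 0$. (One small notational slip: in your displayed composition the $i$-th bid coordinate should be \emph{replaced} by $s_i^{(t)}(v_i)$, not listed as an extra argument.)
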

\begin{proof}
We first prove that $\tilde{\M}^{(t)}$ is IC. We notice that the allocation function $\tilde{\x}^{(t)}$ is non-decreasing because it is a composition of the non-decreasing allocation function $\x^{(t)}$ (Assumption~\ref{assumption:auction_format}-\ref{assumption:allocation_monotonicity}) and the monotone bidding strategies $s_i^{(t)}$ generated by Algorithm~\ref{alg:meta}. Moreover, we observe that the payment function $\tilde{\p}^{(t)}$ given by Eq.~\eqref{eq:auxiliary_auction_payment} satisfies Myerson's payment condition in Lemma~\ref{lem:myersons_lemma}. Thus, it follows by Lemma~\ref{lem:myersons_lemma} that $\tilde{\M}^{(t)}$ is IC.

Then, we note that by Eq.~\eqref{eq:auxiliary_auction_payment}, for all $i\in[n]$, for any $v_i\in[0,1]$ and $\b_{-i}\in[0,1]^{n-1}$,
\[
    \tilde{x}_i^{(t)}(v_i,\b_{-i})v_i-\tilde{p}_i^{(t)}(v_i,\b_{-i})=\int_{0}^{v_i} \tilde{x}_i^{(t)}(z,\b_{-i})dz\ge 0,
\]
which shows that $\tilde{\M}^{(t)}$ is IR.
\end{proof}

In Claim~\ref{claim:auxiliary_auction_payment}, we derive a useful expression for $\Rev_{\tilde{\M}^{(t)}}(\D)=\E_{\v\sim\D}\big[\sum_{i\in[n]}\tilde{p}_i^{(t)}(\v)\big]$, namely, the expected revenue obtained by the auxiliary auction $\tilde{\M}^{(t)}$ in the classic auction setting.
\begin{claim}\label{claim:auxiliary_auction_payment}
Suppose that all bidders adopt Algorithm~\ref{alg:meta} (with arbitrary input algorithms). Then, for each round $t\in[T]$, given any auction format $\M^{(t)}$, bidders' quantile strategies $\bm{\pi}^{(i,t)}$, and the corresponding bidding strategies $s_i^{(t)}=s_{i,\bm{\pi}^{(i,t)}}$ (as defined by Eq.~\eqref{eq:s_i_pi}), the expected revenue of the auxiliary auction $\tilde{\M}^{(t)}$ can be expressed as
\[
    \Rev_{\tilde{\M}^{(t)}}(\D)=\sum_{i\in[n]}\E\!_{\v_{-i}^{(t)}}\Big[\sum_{j=1}^K\sum_{k=j+1}^{K+1}\pi_k^{(i,t)}\cdot \Big(x_i^{(t)}\Big(\frac{j}{K},\b_{-i}^{(t)}\Big)-x_i^{(t)}\Big(\frac{j-1}{K},\b_{-i}^{(t)}\Big)\Big)\cdot F_i^{-1}\big(\sum_{\ell=1}^j\pi_{\ell}^{(i,t)}\big)\Big].
\]
\end{claim}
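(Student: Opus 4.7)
\textbf{Proof proposal for Claim~\ref{claim:auxiliary_auction_payment}.} The plan is to start from the Myerson-style payment formula~\eqref{eq:auxiliary_auction_payment}, exploit the step structure of $\tilde{\x}^{(t)}(\cdot,\v_{-i})$ as a function of $v_i$, and use a summation-by-parts manipulation to bring the resulting expression into the claimed double-sum form. Concretely, by linearity of expectation and Definition~\ref{def:auxiliary_auction}, I first decompose $\Rev_{\tilde{\M}^{(t)}}(\D)=\sum_{i\in[n]}\E_{\v_{-i}^{(t)}}\big[\E_{v_i^{(t)}}[\tilde{p}_i^{(t)}(v_i^{(t)},\v_{-i}^{(t)})\mid\v_{-i}^{(t)}]\big]$, so it suffices to analyze the inner conditional expectation for a fixed bidder $i$ and a fixed $\v_{-i}^{(t)}$.

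Next, I would fix the notation $a_j:=F_i^{-1}\big(\sum_{\ell=1}^{j}\pi_{\ell}^{(i,t)}\big)$ for $j\in\{0,1,\dots,K+1\}$ (with $a_0=0$, $a_{K+1}=1$) and $X_j:=x_i^{(t)}(j/K,\b_{-i}^{(t)})$ for $j\in\{0,\dots,K\}$. By Eq.~\eqref{eq:s_i_pi} and Eq.~\eqref{eq:auxiliary_auction_allocation}, the function $v_i\mapsto\tilde{x}_i^{(t)}(v_i,\v_{-i}^{(t)})$ is the non-decreasing step function equal to $X_{j-1}$ on each interval $(a_{j-1},a_j]$. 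Plugging this into Eq.~\eqref{eq:auxiliary_auction_payment} and computing the integral piecewise, for any $v_i\in(a_{j-1},a_j]$ the bulk terms cancel and one obtains
\[
\tilde{p}_i^{(t)}(v_i,\v_{-i}^{(t)})= X_{j-1}a_{j-1}-\sum_{\ell=1}^{j-1}X_{\ell-1}(a_\ell-a_{\ell-1}).
\]
An Abel summation (writing $X_{j-1}-X_{\ell-1}=\sum_{m=\ell}^{j-1}(X_m-X_{m-1})$ and swapping the sums) rewrites this as $\sum_{m=1}^{j-1}(X_m-X_{m-1})\,a_m$, which depends on $v_i$ only through the index $j$ of the interval containing it.

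Finally, I would take expectation over $v_i^{(t)}\sim\D_i$, using $\Pr[v_i^{(t)}\in(a_{j-1},a_j]]=\pi_j^{(i,t)}$, to get
\[
\E_{v_i^{(t)}}\!\big[\tilde{p}_i^{(t)}(v_i^{(t)},\v_{-i}^{(t)})\big]=\sum_{j=1}^{K+1}\pi_j^{(i,t)}\sum_{m=1}^{j-1}(X_m-X_{m-1})\,a_m.
\]
Swapping the order of summation (so $m$ ranges over $[K]$ and then $j$ over $\{m+1,\dots,K+1\}$) and renaming $m\mapsto j$, $j\mapsto k$ produces exactly the claimed inner expression; taking $\E_{\v_{-i}^{(t)}}$ and summing over $i$ finishes the proof. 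The only nontrivial step is the summation-by-parts cancellation that turns the piecewise payment into a clean increment-weighted sum $\sum_m(X_m-X_{m-1})a_m$; the rest is bookkeeping, and no further properties of $\M^{(t)}$ beyond its being well-defined are required (the IC/IR structure provided by Claim~\ref{claim:auxiliary_auction_IC_and_IR} is what makes Eq.~\eqref{eq:auxiliary_auction_payment} the right starting point, but it is not reinvoked in the computation).
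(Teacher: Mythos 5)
Your proposal is correct and follows essentially the same approach as the paper's proof: expand Myerson's payment formula for the auxiliary auction, exploit the step structure of the composed allocation $v_i\mapsto\tilde{x}_i^{(t)}(v_i,\v_{-i}^{(t)})$, apply an Abel summation to rewrite the piecewise payment as an increment-weighted sum, and then integrate over $v_i^{(t)}$ and swap sums. The only (cosmetic) difference is that you observe the payment is constant on each interval $(a_{j-1},a_j]$ before integrating, whereas the paper carries $v_i^{(t)}$ one step further before it cancels; the computations are otherwise identical.
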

We clarify that in Algorithm~\ref{alg:meta}, the bid vector $\b_{-i}^{(t)}$ is determined by the bidders' quantile/bidding strategies and their private values $\v_{-i}^{(t)}$ at round $t$. Hence, the expectation on the R.H.S.~of the above equation is taken over the randomness of $\v_{-i}^{(t)}$, which induces randomness in $\b_{-i}^{(t)}$.

The technical proof of Claim~\ref{claim:auxiliary_auction_payment} is rather tedious, but it is essentially an application of Myerson's payment rule followed by algebraic manipulations, which we defer to Section~\ref{section:proof_of_claim_auxiliary_auction_payment}.

\subsection{Main lemma}\label{section:main_lemma}
We introduce some notation to simplify the lemma statement. We let $\e:=(1,0,\dots,0)$ denote the first standard basis vector in $\R^{K+1}$. For each round $t\in[T]$, we let $Q^{(t)}:=(\bm{\pi}^{(1,t)},\dots,\bm{\pi}^{(n,t)})$ be the collection of bidders' quantile strategies, which determine their bidding strategies $s_i^{(t)}$, and we let $\Rev_{\M^{(t)}}\big(\D\mid Q^{(t)}\big)$ denote the auctioneer's expected revenue at round $t$ conditioned on $Q^{(t)}$, i.e.,
\begin{equation}\label{eq:revenue_conditioned_on_s_t}
    \Rev_{\M^{(t)}}\big(\D\mid Q^{(t)}\big):=\sum_{i\in[n]}\E_{\v^{(t)}\sim\D}\big[p_i^{(t)}\big(s_1^{(t)}(v_1^{(t)}),\dots,s_n^{(t)}(v_n^{(t)})\big) \mid Q^{(t)}\big].
\end{equation}

Now we state the main lemma (Lemma~\ref{lem:main_lemma}), which relates the gradient of the quantile utility function to the difference between the auctioneer's expected revenue at round $t$ and the expected revenue of the auxiliary auction. A similar relation for the special case of single-bidder first-price auctions was established by~\citet[page 21]{KSS24}.
\begin{lemma}\label{lem:main_lemma}
Suppose that all bidders adopt Algorithm~\ref{alg:meta} (with arbitrary input algorithms). Then, for each round $t\in[T]$, given any auction format $\M^{(t)}$ (under Assumption~\ref{assumption:auction_format}-\ref{assumption:voluntary_participation}), bidders' quantile strategies $\bm{\pi}^{(i,t)}$, and the corresponding bidding strategies $s_i^{(t)}=s_{i,\bm{\pi}^{(i,t)}}$ (as defined by Eq.~\eqref{eq:s_i_pi}), we have that
\begin{equation}\label{eq:main_lemma}
    \sum_{i\in[n]} \E\!_{\v_{-i}^{(t)}}\big[\nabla q_i^{(t)}(\bm{\pi}^{(i,t)})^{\top}(\e-\bm{\pi}^{(i,t)})\mid Q^{(t)}\big]=\Rev_{\M^{(t)}}\big(\D \mid Q^{(t)}\big)-\Rev_{\tilde{\M}^{(t)}}(\D).
\end{equation}
\end{lemma}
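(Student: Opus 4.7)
The plan is to expand both sides of Eq.~\eqref{eq:main_lemma} using the closed-form gradient from Lemma~\ref{lem:quantile_utility_function_is_concave} and the revenue formula in Claim~\ref{claim:auxiliary_auction_payment}, and then match them by algebra. Writing $A_k^{(i,t)} := \sum_{j=k}^{K}(x_i^{(t)}(\tfrac{j-1}{K},\b_{-i}^{(t)}) - x_i^{(t)}(\tfrac{j}{K},\b_{-i}^{(t)}))\cdot F_i^{-1}(\sum_{\ell=1}^j \pi_\ell^{(i,t)})$ and $B^{(i,t)} := x_i^{(t)}(1,\b_{-i}^{(t)})$, Lemma~\ref{lem:quantile_utility_function_is_concave} gives $\frac{\partial q_i^{(t)}}{\partial \pi_k}(\bm{\pi}^{(i,t)}) = A_k^{(i,t)} + B^{(i,t)} - p_i^{(t)}(\tfrac{k-1}{K},\b_{-i}^{(t)})$. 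Voluntary participation (Assumption~\ref{assumption:auction_format}-\ref{assumption:voluntary_participation}) kicks in exactly once: $p_i^{(t)}(0,\b_{-i}^{(t)})=0$, so $\nabla q_i^{(t)}(\bm{\pi}^{(i,t)})^{\top}\e = A_1^{(i,t)} + B^{(i,t)}$. Since $\bm{\pi}^{(i,t)}$ lies on the simplex, $\nabla q_i^{(t)}(\bm{\pi}^{(i,t)})^{\top}\bm{\pi}^{(i,t)} = B^{(i,t)} + \sum_k \pi_k^{(i,t)}A_k^{(i,t)} - \sum_k \pi_k^{(i,t)} p_i^{(t)}(\tfrac{k-1}{K},\b_{-i}^{(t)})$, so the $B^{(i,t)}$ terms cancel in the difference, leaving
\[
\nabla q_i^{(t)}(\bm{\pi}^{(i,t)})^{\top}(\e - \bm{\pi}^{(i,t)}) = \Big(A_1^{(i,t)} - \sum_{k\in[K+1]} \pi_k^{(i,t)} A_k^{(i,t)}\Big) + \sum_{k\in[K+1]} \pi_k^{(i,t)}\, p_i^{(t)}(\tfrac{k-1}{K},\b_{-i}^{(t)}).
\]

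Next I would identify these two summands with the two RHS terms of Eq.~\eqref{eq:main_lemma}. By Eq.~\eqref{eq:s_i_pi}, $s_{i,\bm{\pi}^{(i,t)}}$ places bidder $i$'s bid at $\tfrac{k-1}{K}$ on a value interval of $\D_i$-mass $\pi_k^{(i,t)}$; averaging over $v_i^{(t)}\sim\D_i$ therefore gives $\E_{v_i^{(t)}}[p_i^{(t)}(s_{i,\bm{\pi}^{(i,t)}}(v_i^{(t)}),\b_{-i}^{(t)})] = \sum_k \pi_k^{(i,t)}\, p_i^{(t)}(\tfrac{k-1}{K},\b_{-i}^{(t)})$, so summing over $i$ and taking $\E_{\v_{-i}^{(t)}}[\,\cdot \mid Q^{(t)}]$ recovers $\Rev_{\M^{(t)}}(\D\mid Q^{(t)})$ by Eq.~\eqref{eq:revenue_conditioned_on_s_t}.

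For the allocation part, I would swap the order of summation in $\sum_k \pi_k^{(i,t)} A_k^{(i,t)}$: each $j\in[K]$ contributes with weight $\sum_{k=1}^j \pi_k^{(i,t)}$ (noting $A_{K+1}^{(i,t)}=0$). Then, using $1 - \sum_{k=1}^j \pi_k^{(i,t)} = \sum_{k=j+1}^{K+1}\pi_k^{(i,t)}$—where the simplex constraint is used a second time—a short calculation yields
\[
A_1^{(i,t)} - \sum_k \pi_k^{(i,t)} A_k^{(i,t)} = -\sum_{j=1}^K\sum_{k=j+1}^{K+1}\pi_k^{(i,t)}\big(x_i^{(t)}(\tfrac{j}{K},\b_{-i}^{(t)}) - x_i^{(t)}(\tfrac{j-1}{K},\b_{-i}^{(t)})\big)F_i^{-1}\big(\sum_{\ell=1}^j\pi_\ell^{(i,t)}\big),
\]
where the overall sign flips because Lemma~\ref{lem:quantile_utility_function_is_concave} writes $x_i^{(t)}(\tfrac{j-1}{K})-x_i^{(t)}(\tfrac{j}{K})$ while Claim~\ref{claim:auxiliary_auction_payment} uses the opposite convention. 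Summing over $i$ and taking $\E_{\v_{-i}^{(t)}}[\,\cdot\mid Q^{(t)}]$ therefore matches $-\Rev_{\tilde{\M}^{(t)}}(\D)$ by Claim~\ref{claim:auxiliary_auction_payment}, which combined with the previous paragraph gives Eq.~\eqref{eq:main_lemma}.

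The main obstacle is not conceptual but combinatorial bookkeeping: carefully tracking the cancellation of the constant $B^{(i,t)}$ term, executing the double-sum swap together with the simplex-based telescoping, and respecting the opposite sign conventions for the allocation differences in Lemma~\ref{lem:quantile_utility_function_is_concave} and Claim~\ref{claim:auxiliary_auction_payment}. Voluntary participation enters at only one place (killing $p_i^{(t)}(0,\b_{-i}^{(t)})$ inside $\nabla q_i^{(t)}(\bm{\pi}^{(i,t)})^\top\e$), which is the one structural input beyond the gradient formula and Claim~\ref{claim:auxiliary_auction_payment} that the argument needs.
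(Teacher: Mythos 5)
Your proposal is correct and follows essentially the same route as the paper's proof: expand $\nabla q_i^{(t)}(\bm{\pi}^{(i,t)})^{\top}(\e-\bm{\pi}^{(i,t)})$ via the gradient formula of Lemma~\ref{lem:quantile_utility_function_is_concave}, use voluntary participation once to drop $p_i^{(t)}(0,\b_{-i}^{(t)})$, swap the double sum and apply the simplex constraint, and identify the two resulting pieces with $-\Rev_{\tilde{\M}^{(t)}}(\D)$ (via Claim~\ref{claim:auxiliary_auction_payment}, with the sign flip you correctly flag) and $\Rev_{\M^{(t)}}(\D\mid Q^{(t)})$. The only cosmetic difference is that you isolate the constant $B^{(i,t)}=x_i^{(t)}(1,\b_{-i}^{(t)})$ and note its cancellation explicitly, whereas the paper folds it into $\delta_{i,K+1}^{(t)}$ so that it disappears automatically when the weight $1-\sum_{k\le K+1}\pi_k^{(i,t)}$ vanishes.
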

We clarify that in Algorithm~\ref{alg:meta}, bidder $i$'s quantile utility function $q_i^{(t)}$ at round $t$ (as defined in Eq.~\eqref{eq:quantile_utility_function}) depends on the other bidders' bids $\b_{-i}^{(t)}$, which in turn are determined by the other bidders' quantile/bidding strategies and private values $\v_{-i}^{(t)}$ at round $t$. Hence, the expectation in Eq.~\eqref{eq:main_lemma} is taken over the randomness of $\v_{-i}^{(t)}$. We proceed to the proof of Lemma~\ref{lem:main_lemma}.
\begin{proof}[Proof of Lemma~\ref{lem:main_lemma}]
For all $i\in[n]$ and $\b_{-i}^{(t)}\in B^{n-1}$, we denote
\begin{equation}\label{eq:delta_i_j}
    \delta_{i,j}^{(t)}(\b_{-i}^{(t)}):=\big(x_i^{(t)}\big(\tfrac{j-1}{K},\b_{-i}^{(t)}\big)-x_i^{(t)}\big(\tfrac{j}{K},\b_{-i}^{(t)}\big)\big)\cdot F_i^{-1}\big(\sum_{\ell=1}^j \pi_{\ell}^{(i,t)}\big),\,\,\forall\,j\in[K].
\end{equation}
Then, by Eq.~\eqref{eq:gradient_of_quantile_utility_function} in Lemma~\ref{lem:quantile_utility_function_is_concave}, we can succinctly represent the partial derivatives $\frac{\partial q_i^{(t)}}{\partial \pi_k^{(i,t)}}$ as
\begin{equation}\label{eq:gradient_of_quantile_utility_function_compact}
\frac{\partial q_i^{(t)}}{\partial \pi_k^{(i,t)}}=\Big(\sum_{j=k}^{K}\delta_{i,j}^{(t)}(\b_{-i}^{(t)})\Big)+p^{(t)}_i\big(1,\b_{-i}^{(t)}\big)-p^{(t)}_i\big(\tfrac{k-1}{K},\b_{-i}^{(t)}\big),\,\,\forall\,k\in[K].
\end{equation}
\subsubsection*{Step 1: Decomposing $\nabla q_i^{(t)}(\bm{\pi}^{(i,t)})^{\top}(\e-\bm{\pi}^{(i,t)})$}
We first expand $\nabla q_i^{(t)}(\bm{\pi}^{(i,t)})^{\top}(\e-\bm{\pi}^{(i,t)})$ as follows,
\begin{align}\label{eq:expand_dot_product}
    &\nabla q_i^{(t)}(\bm{\pi}^{(i,t)})^{\top}(\e-\bm{\pi}^{(i,t)})\nonumber\\
    =&\,\frac{\partial q_i^{(t)}}{\partial \pi_1^{(i,t)}}-\sum_{k=1}^{K+1}\pi_k^{(i,t)}\cdot\frac{\partial q_i^{(t)}}{\partial \pi_k^{(i,t)}}\nonumber\\
    =&\,\frac{\partial q_i^{(t)}}{\partial \pi_1^{(i,t)}}-\sum_{k=1}^{K}\pi_k^{(i,t)}\cdot\frac{\partial q_i^{(t)}}{\partial \pi_k^{(i,t)}}\qquad\qquad\qquad\qquad\qquad\qquad\qquad\,\,\text{(Since $\frac{\partial q_i^{(t)}}{\partial \pi_{K+1}^{(i,t)}}=0$ by Lemma~\ref{lem:quantile_utility_function_is_concave})}\nonumber\\
    =&\Big(\sum_{j=1}^{K}\delta_{i,j}^{(t)}(\b_{-i}^{(t)})-p^{(t)}_i\big(0,\b_{-i}^{(t)}\big)\Big)-\sum_{k=1}^{K}\pi_k^{(i,t)}\cdot\Big(\sum_{j=k}^{K}\delta_{i,j}^{(t)}(\b_{-i}^{(t)})-p^{(t)}_i\big(\tfrac{k-1}{K},\b_{-i}^{(t)}\big)\Big)\nonumber\\
    &+\pi_{K+1}^{(i,t)}\cdot p^{(t)}_i\big(1,\b_{-i}^{(t)}\big)\qquad\qquad\qquad\qquad\qquad\qquad\qquad\qquad\quad\,\,\,\,\text{(By Eq.~\eqref{eq:gradient_of_quantile_utility_function_compact} and $\textstyle\sum\limits_{k=1}^{K+1}\pi_k^{(i,t)}=1$)}\nonumber\\
    =&\Big(\sum_{j=1}^{K}\delta_{i,j}^{(t)}(\b_{-i}^{(t)})\Big)-\sum_{k=1}^{K}\pi_k^{(i,t)}\cdot\Big(\sum_{j=k}^{K}\delta_{i,j}^{(t)}(\b_{-i}^{(t)})-p^{(t)}_i\big(\tfrac{k-1}{K},\b_{-i}^{(t)}\big)\Big)\nonumber\\
    &+\pi_{K+1}^{(i,t)}\cdot p^{(t)}_i\big(1,\b_{-i}^{(t)}\big) \qquad\qquad\qquad\qquad\qquad\qquad\qquad\qquad\qquad\qquad\quad\,\,\text{(By Assumption~\ref{assumption:auction_format}-\ref{assumption:voluntary_participation})}\nonumber\\
    =&\Big(\sum_{j=1}^{K}\Big(1-\sum_{k=1}^j\pi_k^{(i,t)}\Big)\cdot\delta_{i,j}^{(t)}(\b_{-i}^{(t)})\Big)+\sum_{k=1}^{K+1}\pi_k^{(i,t)}\cdot p^{(t)}_i\big(\tfrac{k-1}{K},\b_{-i}^{(t)}\big)\qquad\quad\,\,\,\,\text{(By exchanging the sums)}\nonumber\\
    =&\Big(\sum_{j=1}^{K}\sum_{k=j+1}^{K+1}\pi_k^{(i,t)}\cdot\delta_{i,j}^{(t)}(\b_{-i}^{(t)})\Big)+\sum_{k=1}^{K+1}\pi_k^{(i,t)}\cdot p^{(t)}_i\big(\tfrac{k-1}{K},\b_{-i}^{(t)}\big) \qquad\qquad\qquad\qquad\text{(Since $\textstyle\sum\limits_{k=1}^{K+1}\pi_k^{(i,t)}=1$)}.
\end{align}

\subsubsection*{Step 2: Connecting the first term to $\Rev_{\tilde{\M}^{(t)}}(\D)$}
We observe that by Claim~\ref{claim:auxiliary_auction_payment} and Eq.~\eqref{eq:delta_i_j},
\begin{equation}\label{eq:first_term_and_M_tilde}
    \Rev_{\tilde{\M}^{(t)}}(\D)=-\sum_{i\in[n]}\E\!_{\v_{-i}^{(t)}}\Big[\sum_{j=1}^{K}\sum_{k=j+1}^{K+1}\pi_k^{(i,t)}\cdot\delta_{i,j}^{(t)}(\b_{-i}^{(t)})\Big].
\end{equation}

\subsubsection*{Step 3: Identifying the second term with $\Rev_{\M^{(t)}}\big(\D \mid Q^{(t)}\big)$}
Moreover, we notice that for any $i\in[n]$ and given any $\b_{-i}^{(t)}\in B^{n-1}$,
\begin{align}\label{eq:dot_product_second_term}
    \E_{v_i^{(t)}\sim\D_i} \big[p^{(t)}_i\big(s_i^{(t)}(v_i^{(t)}),\b_{-i}^{(t)}\big)\big]&=\E_{v_i^{(t)}\sim\D_i} \big[p^{(t)}_i\big(s_{i,\bm{\pi}^{(i,t)}}(v_i^{(t)}),\b_{-i}^{(t)}\big)\big]&&\text{(Since $s_i^{(t)}=s_{i,\bm{\pi}^{(i,t)}}$)}\nonumber\\
    &=\sum_{k=1}^{K+1}\pi_k^{(i,t)}\cdot p^{(t)}_i\big(\tfrac{k-1}{K},\b_{-i}^{(t)}\big)&&\text{(By Eq.~\eqref{eq:s_i_pi})}.
\end{align}
Hence, we have that
\begin{align}\label{eq:second_term_and_M}
\Rev_{\M^{(t)}}\big(\D \mid Q^{(t)}\big)&=\sum_{i\in[n]}\E_{\v^{(t)}\sim\D}\big[p^{(t)}_i\big(s_1^{(t)}(v_1^{(t)}),\dots,s_n^{(t)}(v_n^{(t)})\big)\mid Q^{(t)}\big] &&\text{(By Eq.~\eqref{eq:revenue_conditioned_on_s_t})}\nonumber\\
&=\sum_{i\in[n]}\E\!_{\v_{-i}^{(t)}}\Big[\E_{v_i^{(t)}\sim\D_i} \big[p^{(t)}_i\big(s_i^{(t)}(v_i^{(t)}),\b_{-i}^{(t)}\big)\big]\mid Q^{(t)}\Big]&&\text{(Since $b_{j}^{(t)}=s_{j}^{(t)}(v_{j}^{(t)})$)}\nonumber\\
&=\sum_{i\in[n]}\E\!_{\v_{-i}^{(t)}}\Big[\sum_{k=1}^{K+1}\pi_k^{(i,t)}\cdot p^{(t)}_i\big(\tfrac{k-1}{K},\b_{-i}^{(t)}\big)\mid Q^{(t)}\Big]&&\text{(By Eq.~\eqref{eq:dot_product_second_term})}.
\end{align}

\subsubsection*{Step 4: Combining the two terms}
Putting together Eq.~\eqref{eq:expand_dot_product},~\eqref{eq:first_term_and_M_tilde} and~\eqref{eq:second_term_and_M}, we obtain that
\begin{align*}
    &\sum_{i\in[n]}\E\!_{\v_{-i}^{(t)}}\big[\nabla q_i^{(t)}(\bm{\pi}^{(i,t)})^{\top}(\e-\bm{\pi}^{(i,t)})\mid Q^{(t)}\big]\\
    =&\sum_{i\in[n]}\E\!_{\v_{-i}^{(t)}}\Big[\sum_{j=1}^{K}\sum_{k=j+1}^{K+1}\pi_k^{(i,t)}\cdot\delta_{i,j}^{(t)}(\b_{-i}^{(t)})\mid Q^{(t)}\Big]+\sum_{i\in[n]}\E\!_{\v_{-i}^{(t)}}\Big[\sum_{k=1}^{K+1}\pi_k^{(i,t)}\cdot p^{(t)}_i\big(\tfrac{k-1}{K},\b_{-i}^{(t)}\big)\mid Q^{(t)}\Big]\\
    =&\,\Rev_{\M^{(t)}}\big(\D \mid Q^{(t)}\big)-\Rev_{\tilde{\M}^{(t)}}(\D),
\end{align*}
which establishes the lemma.
\end{proof}

\subsection{Proof of the main theorem}\label{section:proof_of_main_theorem}
We are now ready to prove Theorem~\ref{thm:meta_strategic_robustness}.
\begin{proof}[Proof of Theorem~\ref{thm:meta_strategic_robustness}]
Let $\Rev_{acc}$ denote the auctioneer's expected accumulated revenue over all $T$ rounds. Recall that for each $t\in[T]$, $\Rev_{\M^{(t)}}\big(\D\mid Q^{(t)}\big)$ (defined in Eq.~\eqref{eq:revenue_conditioned_on_s_t}) is the auctioneer's expected revenue at round $t$ conditioned on $Q^{(t)}$. Hence, we have that
\begin{equation}\label{eq:accumulated_revenue}
    \Rev_{acc}=\sum_{t\in[T]}\E\!_{Q^{(t)}}\big[\Rev_{\M^{(t)}}\big(\D \mid Q^{(t)}\big)\big].
\end{equation}
Recall that $\Mye(\D)$ denotes the expected revenue of the optimal auction for $\D$. We derive that
\begin{align}\label{eq:strategic_loss}
    \Rev_{acc}-\Mye(\D)\cdot T&=\sum_{t\in[T]}\E\!_{Q^{(t)}}\big[\Rev_{\M^{(t)}}\big(\D \mid Q^{(t)}\big)-\Mye(\D)\big]\qquad\qquad\qquad\qquad\,\,\,\text{(By Eq.~\eqref{eq:accumulated_revenue})}\nonumber\\
    &\le\sum_{t\in[T]}\E\!_{Q^{(t)}}\big[\Rev_{\M^{(t)}}\big(\D \mid Q^{(t)}\big)-\Rev_{\tilde{\M}^{(t)}}(\D)\big]\qquad\qquad\qquad\,\,\,\,\,\text{(By Claim~\ref{claim:auxiliary_auction_IC_and_IR})}\nonumber\\
    &=\sum_{t\in[T]}\E\!_{Q^{(t)}}\Big[\sum_{i\in[n]}\E\!_{\v_{-i}^{(t)}}\big[\nabla q_i^{(t)}(\bm{\pi}^{(i,t)})^{\top}(\e-\bm{\pi}^{(i,t)})\mid Q^{(t)}\big]\Big]\qquad\,\,\,\,\text{(By Lemma~\ref{lem:main_lemma})}\nonumber\\
    &=\sum_{t\in[T]}\E\!_{Q^{(t)}}\Big[\sum_{i\in[n]}\E\!_{\v^{(t)}}\big[\nabla q_i^{(t)}(\bm{\pi}^{(i,t)})^{\top}(\e-\bm{\pi}^{(i,t)})\mid Q^{(t)}\big]\Big]\nonumber\\
    &=\sum_{t\in[T]}\sum_{i\in[n]}\E\!_{Q^{(t)},\v^{(t)}}\big[\nabla q_i^{(t)}(\bm{\pi}^{(i,t)})^{\top}(\e-\bm{\pi}^{(i,t)})\big]\nonumber\\
    &=\sum_{t\in[T]}\sum_{i\in[n]}\E\!_{Q^{(1)},\dots,Q^{(T)},\v^{(1)},\dots,\v^{(T)}}\big[\nabla q_i^{(t)}(\bm{\pi}^{(i,t)})^{\top}(\e-\bm{\pi}^{(i,t)})\big]\nonumber\\
    &=\sum_{i\in[n]}\E\!_{Q^{(1)},\dots,Q^{(T)},\v^{(1)},\dots,\v^{(T)}}\Big[\sum_{t\in[T]}\nabla q_i^{(t)}(\bm{\pi}^{(i,t)})^{\top}(\e-\bm{\pi}^{(i,t)})\Big].
\end{align}
Notice that, by Lemma~\ref{lem:quantile_utility_function_is_concave}, for each $i\in[n]$, the reward vectors $\nabla q_i^{(t)}(\bm{\pi}^{(i,t)})$ provided to bidder $i$'s input algorithm $\A_i$ are in the range $[-2,1]^{K+1}$, which is consistent with the standard online learning setting (Definition~\ref{def:no_regret_learning_algorithm}). Hence, by our assumption in the lemma statement, for each $i\in[n]$, algorithm $\A_i$ guarantees that
\[
    \max_{\bm{\pi}\in\Delta([K+1])}\sum_{t\in[T]}\nabla q_i^{(t)}(\bm{\pi}^{(i,t)})^{\top} (\bm{\pi}-\bm{\pi}^{(i,t)})\le\Reg_{\A_i}.
\]
In particular, since $\e\in\Delta([K+1])$, it follows from the above inequality that
\[
    \sum_{t\in[T]}\nabla q_i^{(t)}(\bm{\pi}^{(i,t)})^{\top} (\e-\bm{\pi}^{(i,t)})\le\Reg_{\A_i}.
\]
This together with Ineq.~\eqref{eq:strategic_loss} implies that $\Rev_{acc}-\Mye(\D)\cdot T\le\sum_{i\in[n]}\Reg_{\A_i}$.
\end{proof}

\section{Instantiating the meta-algorithm with MWU}
In this section, we analyze the behavior of the MWU algorithm when instantiated within our meta-algorithm (Algorithm~\ref{alg:meta}). While MWU is known to be manipulable~\citep{BMSW18}, we show that it achieves the optimal regret and the best-known strategic robustness guarantee in our auction setting, when combined\footnote{For readers familiar with online convex optimization, this combination is equivalent to applying the exponentiated gradient algorithm~\citep[Algorithm 15]{Hazan16} to our quantile utility functions.} with our meta-algorithm (Algorithm~\ref{alg:meta}). Importantly, the strategic robustness guarantee we obtain does not stem from any new property of MWU itself, but from the structural transformation induced by our meta-algorithm.

We refer interested readers to~\citet{AHK12} for a comprehensive exposition of the MWU algorithm. Here, we simply state its regret guarantee in Lemma~\ref{lem:MWU}.

\begin{lemma}[{See e.g.,~\citet[Theorem 2.1]{AHK12}}]\label{lem:MWU}
Given any sequence of reward vectors $\r^{(1)},\dots,\r^{(T)}\in[-2,2]^{K+1}$, the MWU algorithm guarantees that $\Reg_{\MWU}=O(\sqrt{T\log(K)})$.
\end{lemma}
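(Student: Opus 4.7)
The plan is to prove the MWU regret bound via the standard potential function argument, adapted to the signed reward range $[-1,1]$ rather than the more common $[0,1]$. I would instantiate MWU in its exponential form: initialize weights $w_k^{(1)}=1$ for all $k\in[K+1]$, play the distribution $\bm{\pi}^{(t)}$ with $\pi_k^{(t)}=w_k^{(t)}/\Phi^{(t)}$ where $\Phi^{(t)}:=\sum_{k}w_k^{(t)}$, and update $w_k^{(t+1)}=w_k^{(t)}\exp(\eta r_k^{(t)})$ for a learning rate $\eta\in(0,1/2]$ to be chosen at the end.

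The first step is to derive a per-round upper bound on the log-potential ratio. Using the inequality $\exp(\eta x)\le 1+\eta x+\eta^2 x^2$ for $|\eta x|\le 1$ (valid here since $|\eta r_k^{(t)}|\le 1/2$), I would compute
\[
\frac{\Phi^{(t+1)}}{\Phi^{(t)}}=\sum_{k}\pi_k^{(t)}\exp(\eta r_k^{(t)})\le 1+\eta\langle\bm{\pi}^{(t)},\r^{(t)}\rangle+\eta^2\sum_{k}\pi_k^{(t)}(r_k^{(t)})^2,
\]
then use $\log(1+y)\le y$ and the bound $(r_k^{(t)})^2\le 1$ to obtain $\log\Phi^{(t+1)}-\log\Phi^{(t)}\le\eta\langle\bm{\pi}^{(t)},\r^{(t)}\rangle+\eta^2$. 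Telescoping across the $T$ rounds and using $\Phi^{(1)}=K+1$ yields
\[
\log\Phi^{(T+1)}\le\log(K+1)+\eta\sum_{t\in[T]}\langle\bm{\pi}^{(t)},\r^{(t)}\rangle+\eta^2 T.
\]

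The second step is a matching lower bound. For any fixed comparator action $j\in[K+1]$, $\log\Phi^{(T+1)}\ge\log w_j^{(T+1)}=\eta\sum_{t\in[T]}r_j^{(t)}$. Combining the two bounds and rearranging gives
\[
\sum_{t\in[T]}\bigl(r_j^{(t)}-\langle\bm{\pi}^{(t)},\r^{(t)}\rangle\bigr)\le\frac{\log(K+1)}{\eta}+\eta T.
\]
Maximizing over $j$ and tuning $\eta=\sqrt{\log(K+1)/T}$ produces $\Reg_{\MWU}\le 2\sqrt{T\log(K+1)}=O(\sqrt{T\log K})$.

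I do not anticipate a genuine obstacle here, since this is a textbook argument, but the one subtlety to handle carefully is the signed reward regime: the classical $(1+\eta r)$ update is valid only for $r\in[0,1]$, so I would either adopt the exponential-weights update as above (which handles $r\in[-1,1]$ directly with the quadratic Taylor inequality) or, alternatively, shift rewards to $\tilde r_k^{(t)}:=(r_k^{(t)}+1)/2\in[0,1]$, run the standard multiplicative update on $\tilde\r^{(t)}$, and observe that the affine shift cancels in the regret because the comparator and the algorithm both see the same additive offset. Either route gives the stated bound.
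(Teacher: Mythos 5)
The paper does not prove this lemma itself but simply cites \citet[Theorem 2.1]{AHK12}, and your argument is exactly the standard potential-function proof of that theorem, correctly adapted to signed rewards via the quadratic Taylor bound on $\exp(\eta x)$. The derivation is correct (including the tuning of $\eta$ and the caveat that the bound is trivial when $T=O(\log K)$), so there is nothing to add.
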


In Corollary~\ref{cor:MWU+meta}, we establish the regret and strategic robustness guarantees of Algorithm~\ref{alg:meta} when it is instantiated with the MWU algorithm. The regret guarantee follows from Proposition~\ref{prop:meta_no_regret} and Lemma~\ref{lem:MWU}, and the strategic robustness guarantee follows from Theorem~\ref{thm:meta_strategic_robustness} and Lemma~\ref{lem:MWU}.
\begin{corollary}\label{cor:MWU+meta}
Suppose that all bidders adopt Algorithm~\ref{alg:meta} with the MWU algorithm as the input algorithm. Then, for any joint prior distribution $\D$ of bidders’ values, under Assumption~\ref{assumption:auction_format}, the following hold:
\begin{enumerate}[i.]
    \item The regret $\Reg_i$ of each bidder $i\in[n]$ (as defined in Eq.~\eqref{eq:regret}) is at most $O(\sqrt{T\log(K)})$.
    \item The auctioneer’s expected cumulative revenue is at most $\Mye(\D)\cdot T + O(n\cdot\sqrt{T\log(K)})$.
\end{enumerate}
\end{corollary}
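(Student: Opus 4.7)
The plan is to derive Corollary~\ref{cor:MWU+meta} as a direct composition of three ingredients already established: the regret bound for MWU (Lemma~\ref{lem:MWU}), the regret-preservation property of the meta-algorithm (Proposition~\ref{prop:meta_no_regret}), and the strategic robustness guarantee of the meta-algorithm (Theorem~\ref{thm:meta_strategic_robustness}). The compatibility check that ties everything together is that, by Lemma~\ref{lem:quantile_utility_function_is_concave}, the reward vectors $\nabla q_i^{(t)}(\bm{\pi}^{(i,t)})$ that Algorithm~\ref{alg:meta} feeds into MWU lie coordinatewise in $[-1,1]$, which is exactly the range in which the MWU regret bound in Lemma~\ref{lem:MWU} applies.

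For part~(i), I would first instantiate Lemma~\ref{lem:MWU} on the sequence of reward vectors observed by bidder $i$'s copy of MWU, yielding $\Reg_{\MWU} = O(\sqrt{T\log K})$ for each $i\in[n]$. Since Assumption~\ref{assumption:auction_format}-\ref{assumption:allocation_monotonicity} is part of Assumption~\ref{assumption:auction_format}, Proposition~\ref{prop:meta_no_regret} then immediately gives $\Reg_i \le \Reg_{\MWU} = O(\sqrt{T\log K})$, establishing the per-bidder regret bound.

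For part~(ii), I would invoke Theorem~\ref{thm:meta_strategic_robustness} with $\A_i = \MWU$ for every $i\in[n]$. The theorem upper bounds the auctioneer's expected cumulative revenue by $\Mye(\D)\cdot T + \sum_{i\in[n]}\Reg_{\A_i}$. Substituting the per-bidder MWU bound $\Reg_{\A_i} = O(\sqrt{T\log K})$ from Lemma~\ref{lem:MWU} and summing over the $n$ bidders yields $\Mye(\D)\cdot T + O(n\sqrt{T\log K})$, as claimed.

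Essentially no technical obstacle remains, because the heavy lifting already happened in Lemma~\ref{lem:main_lemma}, Theorem~\ref{thm:meta_strategic_robustness}, and Proposition~\ref{prop:meta_no_regret}. The only point that requires a line of care is the range check mentioned above: one must verify that the $[-1,1]$ coordinatewise bound on $\nabla q_i^{(t)}$ established in Lemma~\ref{lem:quantile_utility_function_is_concave} is consistent with the reward-vector regime in Definition~\ref{def:no_regret_learning_algorithm} under which Lemma~\ref{lem:MWU} is stated, so that both Proposition~\ref{prop:meta_no_regret} and Theorem~\ref{thm:meta_strategic_robustness} can be applied to MWU without modification.
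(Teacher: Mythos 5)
Your proposal matches the paper's proof exactly: the corollary is derived by composing Lemma~\ref{lem:MWU} with Proposition~\ref{prop:meta_no_regret} for part~(i) and with Theorem~\ref{thm:meta_strategic_robustness} for part~(ii), and the range check via Lemma~\ref{lem:quantile_utility_function_is_concave} is precisely the compatibility condition the paper relies on (already noted in the proofs of Proposition~\ref{prop:meta_no_regret} and Theorem~\ref{thm:meta_strategic_robustness}). No gaps.
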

For comparison, the standard regret guarantee of the agile and lazy online gradient descent (OGD) algorithms~\citep[Section 5.4.1]{Hazan16} is $O(\sqrt{T\cdot K})$. Consequently, combining the agile or lazy OGD algorithm with Algorithm~\ref{alg:meta} results in regret and strategic‐robustness guarantees with $\sqrt{T\cdot K}$ dependence. This matches the regret guarantee established by~\citet{KSS24} for agile OGD in first-price auctions, and slightly improves their strategic robustness guarantee for multiple bidders, which scales linearly with $K$. We note that for multiple bidders,~\citet{KSS24} analyze a variant of agile OGD designed for unknown prior distributions, and their analysis establishes a stronger incentive compatibility property under a technical assumption.

Finally, in Proposition~\ref{prop:lower_bound}, we show that the regret guarantee in Corollary~\ref{cor:MWU+meta} is optimal for our repeated auction game, by reducing to a well-known hard instance from the online learning literature~\citep[Theorem 3.7]{CG06}. The proof is provided in Section~\ref{section:proof_of_prop_lower_bound}.
\begin{proposition}\label{prop:lower_bound}
In the repeated auction game under Assumption~\ref{assumption:auction_format}, a bidder's regret (as defined in Eq.~\eqref{eq:regret}) is $\Omega(\sqrt{T\log(K)})$ in the worst case, regardless of the algorithm used.
\end{proposition}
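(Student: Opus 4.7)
The plan is to embed the classical $K$-expert online prediction problem into the repeated auction game and invoke the $\Omega(\sqrt{T\log K})$ regret lower bound of~\citep[Theorem~3.7]{CG06}. The hard instance uses a single bidder ($n=1$) whose prior $\D$ is the uniform distribution on $[\tfrac{1}{2}-\delta,\tfrac{1}{2}+\delta]$ with $\delta=1/T$. This $\D$ is continuous as required by Section~\ref{section:preliminaries}, yet tight enough around $v_0=\tfrac{1}{2}$ that the realized value perturbs any utility by at most $O(\delta)$ per round.

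The oblivious sequence of auction formats I would use encodes the $K$-expert hard instance directly. At each round $t$, set the (monotone) allocation to $x^{(t)}(0)=0$ and $x^{(t)}(b)=1$ for all bids $b>0$, and set the payments to $p^{(t)}(0)=0$ (voluntary participation) together with $p^{(t)}(k/K)=\tfrac{1}{2}-r_k^{(t)}$ for $k\in[K]$, where $r^{(1)},\dots,r^{(T)}\in[0,\tfrac{1}{2}]^K$ is the adversarial reward sequence from~\citep[Theorem~3.7]{CG06} rescaled to lie in $[0,\tfrac{1}{2}]^K$. All payments lie in $[0,\tfrac{1}{2}]$, so Assumption~\ref{assumption:auction_format} holds.

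The key step is to translate the bidder's regret $\Reg_i$ into the experts-problem regret against the sequence $\{r^{(t)}\}$. For any bidding strategy $s\in\S$, let $q_k(s):=\Pr_{v\sim\D}[s(v)=k/K]$. Plugging the chosen $\x^{(t)},\p^{(t)}$ into the utility function~\eqref{eq:utility_function} and using $\E_{v\sim\D}[v\mid s(v)=k/K]=\tfrac{1}{2}\pm\delta$ yields
\[ u^{(t)}(s)=\sum_{k=1}^K q_k(s)\bigl(\E_{v\sim\D}[v\mid s(v)=k/K]-p^{(t)}(k/K)\bigr)=\sum_{k=1}^K q_k(s)\,r_k^{(t)}\pm\delta. \]
Hence the constant-bid strategy $s\equiv k/K$ achieves utility within $\delta$ of $r_k^{(t)}$, giving $\max_{s\in\S}\sum_t u^{(t)}(s)\ge\max_{k\in[K]}\sum_t r_k^{(t)}-\delta T$, while the learner's expected cumulative utility is at most $\sum_t\langle q^{(t)},r^{(t)}\rangle+\delta T$ with $q^{(t)}:=(q_k(s^{(t)}))_{k\in[K]}$ (any probability mass on bid $0$ only wastes reward since $r_k^{(t)}\ge 0$, so without loss of generality $q^{(t)}\in\Delta([K])$).

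Combining, $\Reg_i\ge\max_k\sum_t r_k^{(t)}-\sum_t\langle q^{(t)},r^{(t)}\rangle-O(\delta T)$, which by~\citep[Theorem~3.7]{CG06} is $\Omega(\sqrt{T\log K})-O(1)=\Omega(\sqrt{T\log K})$ for $\delta=1/T$. The main subtlety, and the reason for the concentrated prior, is that $\S$ consists of arbitrary value-to-bid functions rather than single bids; concentration of $\D$ around $v_0$ ensures that no value-dependent strategy can beat the best constant-bid strategy by more than $O(\delta)$ per round, so the expert lower bound survives the reduction with only a negligible additive loss.
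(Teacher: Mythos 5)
Your proposal is correct and takes essentially the same approach as the paper: a single bidder with a value distribution tightly concentrated near a constant (yours near $\tfrac12$, the paper's on $[1-\tfrac1T,1]$), an auction format that allocates iff the bid is nonzero and encodes the experts-problem reward sequence in the payments, and a reduction to the $\Omega(\sqrt{T\log K})$ lower bound of Cesa-Bianchi and Lugosi with an $O(1)$ additive loss from the $O(1/T)$ value perturbation. The only cosmetic difference is how the mass on bid $0$ is handled (you argue it is wasteful, the paper folds it onto expert $1$), which does not change the argument.
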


\section{Discussion}
We have shown that the quantile strategy space can be used to transform any no-regret learning algorithm into a strategically robust one, for repeated auctions satisfying allocation monotonicity and voluntary participation. It is, however, still unclear whether agile OGD itself is also strategically robust in repeated first-price auctions without relying on the quantile strategy space. Specifically, suppose all bidders run a separate instance of agile OGD for each possible value type\footnote{Although the value range is a continuous interval $[0,1]$, we can discretize it into sufficiently small sub-intervals and select a representative value type from each.} and use that instance to determine bids for that value type in repeated first-price auctions. Can the auctioneer strategically choose reserve prices to obtain a total revenue of $\Mye(\D)\cdot T+\Omega(T)$ when facing this algorithm? We did not resolve this question in this paper. However, we note that agile OGD can incur high swap regret (Section~\ref{section:swap_regret}), and therefore, it is manipulable in repeated normal-form games~\citep[Theorem 3]{MMSS22}.

\section*{Acknowledgments}
The author is supported by a postdoctoral fellowship from the Fondation Sciences Mathématiques de Paris (FSMP). The author thanks Jon Schneider for valuable feedback, Weiqiang Zheng for discussions on related independent and concurrent work, and Adrian Vladu and Omri Weinstein for discussions on swap regret and online learning.

\bibliography{cite}

\appendix
\section{Supplementary proofs}
\subsection{Proof of Lemma~\ref{lem:monotone_strategies_suffice}}\label{section:proof_of_lemma_monotone_strategies_suffice}
\begin{proof}[Proof of Lemma~\ref{lem:monotone_strategies_suffice}]
It suffices to show that $\max_{s\in\S} \sum_{t\in[T]} u^{(t)}_{i}(s)=\max_{s\in\S^{\dagger}} \sum_{t\in[T]} u^{(t)}_{i}(s)$. First, we derive that
\begin{align*}
    \max_{s\in\S}\sum_{t\in[T]} u^{(t)}_{i}(s)&=\max_{s\in\S}\sum_{t\in[T]}\E_{v_i\sim\D_i}\big[x_i^{(t)}\big(s(v_i),\b_{-i}^{(t)}\big)\cdot v_i-p^{(t)}_i\big(s(v_i),\b_{-i}^{(t)}\big)\big] &&\text{(By Eq.~\eqref{eq:utility_function})}\\
    &=\max_{s\in\S}\E_{v_i\sim\D_i}\Big[\sum_{t\in[T]}x_i^{(t)}\big(s(v_i),\b_{-i}^{(t)}\big)\cdot v_i-p^{(t)}_i\big(s(v_i),\b_{-i}^{(t)}\big)\Big]\\
    &=\E_{v_i\sim\D_i}\Big[\max_{s(v_i)\in B}\sum_{t\in[T]}x_i^{(t)}\big(s(v_i),\b_{-i}^{(t)}\big)\cdot v_i-p^{(t)}_i\big(s(v_i),\b_{-i}^{(t)}\big)\Big].
\end{align*}
The bidding strategy $s^*$ that maximizes the above expectation is given by
\[
    s^*(v_i):=\argmax_{s(v_i)\in B} \sum_{t\in[T]}x_i^{(t)}\big(s(v_i),\b_{-i}^{(t)}\big)\cdot v_i-p^{(t)}_i\big(s(v_i),\b_{-i}^{(t)}\big),\,\,\forall\,v_i\in[0,1],
\]
where we choose $s^*(v_i)$ to be the largest maximizer in case of ties. Now we prove that $s^*(v_i)\ge s^*(v_i')$ for any $v_i,v_i'\in[0,1]$ such that $v_i>v_i'$, which implies that $s^*\in\S^{\dagger}$ and establishes the lemma.

To this end, suppose for contradiction that $s^*(v_i)< s^*(v_i')$. Then, the pair
$$\Pair_i:=\big(\sum_{t\in[T]}x_i^{(t)}\big(s^*(v_i),\b_{-i}^{(t)}\big),\sum_{t\in[T]}p^{(t)}_i\big(s^*(v_i),\b_{-i}^{(t)}\big)\big)$$
must differ from the pair
$$\Pair_i':=\big(\sum_{t\in[T]}x_i^{(t)}\big(s^*(v_i'),\b_{-i}^{(t)}\big),\sum_{t\in[T]}p^{(t)}_i\big(s^*(v_i'
),\b_{-i}^{(t)}\big)\big),$$
because otherwise $s^*(v_i)$ would not be the largest maximizer, which contradicts our choice of $s^*(v_i)$. Next, we consider the following two cases and derive contradictions in both.
\subsubsection*{Case 1: $\sum_{t\in[T]}x_i^{(t)}\big(s^*(v_i),\b_{-i}^{(t)}\big)=\sum_{t\in[T]}x_i^{(t)}\big(s^*(v_i'),\b_{-i}^{(t)}\big)$}
In this case, both bids $s^*(v_i)$ and $s^*(v_i')$ induce the same cumulative allocation. Moreover, since $\Pair_i\neq \Pair_i'$, we must have that
$$\sum_{t\in[T]}p^{(t)}_i\big(s^*(v_i),\b_{-i}^{(t)}\big)\neq \sum_{t\in[T]}p^{(t)}_i\big(s^*(v_i'
),\b_{-i}^{(t)}\big).$$
This means that one of the two bids, $s^*(v_i)$ or $s^*(v_i')$, incurs a strictly higher cumulative payment, even though both induce the same cumulative allocation. Hence, the bid that incurs the higher payment is strictly suboptimal for both value types $v_i$ and $v_i'$, which contradicts our choice of $s^*(v_i)$ and $s^*(v_i')$.

\subsubsection*{Case 2: $\sum_{t\in[T]}x_i^{(t)}\big(s^*(v_i),\b_{-i}^{(t)}\big)\neq\sum_{t\in[T]}x_i^{(t)}\big(s^*(v_i'),\b_{-i}^{(t)}\big)$}
In this case, we have that
$$\sum_{t\in[T]}x_i^{(t)}\big(s^*(v_i),\b_{-i}^{(t)}\big)<\sum_{t\in[T]}x_i^{(t)}\big(s^*(v_i'),\b_{-i}^{(t)}\big),$$
since we assume that $s^*(v_i)< s^*(v_i')$, and that each $x_i^{(t)}$ is non-decreasing (Assumption~\ref{assumption:auction_format}-\ref{assumption:allocation_monotonicity}). Moreover, it follows from the definition of $s^*(v_i)$ that
\[
    \sum_{t\in[T]}x_i^{(t)}\big(s^*(v_i),\b_{-i}^{(t)}\big)\cdot v_i-p^{(t)}_i\big(s^*(v_i),\b_{-i}^{(t)}\big)\ge\sum_{t\in[T]}x_i^{(t)}\big(s^*(v_i'),\b_{-i}^{(t)}\big)\cdot v_i-p^{(t)}_i\big(s^*(v_i'),\b_{-i}^{(t)}\big).
\]
By rearranging, this is equivalent to
\[
    \sum_{t\in[T]}\big(x_i^{(t)}\big(s^*(v_i'),\b_{-i}^{(t)}\big)-x_i^{(t)}\big(s^*(v_i),\b_{-i}^{(t)}\big)\big)\cdot v_i\le\sum_{t\in[T]}p^{(t)}_i\big(s^*(v_i'),\b_{-i}^{(t)}\big)-p^{(t)}_i\big(s^*(v_i),\b_{-i}^{(t)}\big).
\]
Since $\sum_{t\in[T]}x_i^{(t)}\big(s^*(v_i),\b_{-i}^{(t)}\big)<\sum_{t\in[T]}x_i^{(t)}\big(s^*(v_i'),\b_{-i}^{(t)}\big)$ and $v_i>v_i'$, the above inequality implies
\[
    \sum_{t\in[T]}\big(x_i^{(t)}\big(s^*(v_i'),\b_{-i}^{(t)}\big)-x_i^{(t)}\big(s^*(v_i),\b_{-i}^{(t)}\big)\big)\cdot v_i'<\sum_{t\in[T]}p^{(t)}_i\big(s^*(v_i'),\b_{-i}^{(t)}\big)-p^{(t)}_i\big(s^*(v_i),\b_{-i}^{(t)}\big).
\]
By rearranging, we obtain that
\[
    \sum_{t\in[T]}x_i^{(t)}\big(s^*(v_i'),\b_{-i}^{(t)}\big)\cdot v_i'-p^{(t)}_i\big(s^*(v_i'),\b_{-i}^{(t)}\big)<\sum_{t\in[T]}x_i^{(t)}\big(s^*(v_i),\b_{-i}^{(t)}\big)\cdot v_i'-p^{(t)}_i\big(s^*(v_i),\b_{-i}^{(t)}\big),
\]
which contradicts our choice of $s^*(v_i')$.
\end{proof}

\subsection{Proof of Lemma~\ref{lem:regret_quantile_strategy}}\label{section:proof_of_lemma_regret_quantile_strategy}
\begin{proof}[Proof of Lemma~\ref{lem:regret_quantile_strategy}]
Given bidder $i$'s bidding strategies $s_{i,\bm{\pi}^{(t)}}$ for all $t\in[T]$, we derive that
\begin{align*}
    \Reg_i&=\Reg_i^{\dagger} &&\text{(By Lemma~\ref{lem:monotone_strategies_suffice})}\\
    &=\max_{s\in\S^{\dagger}} \sum_{t\in[T]} u^{(t)}_{i}(s) - u^{(t)}_{i}(s_{i,\bm{\pi}^{(t)}}) &&\text{(By Eq.~\eqref{eq:regret_dagger})}\\
    &=\max_{s\in\S^{\dagger}} \sum_{t\in[T]} u^{(t)}_{i}(s) - q_i^{(t)}(\bm{\pi}^{(t)}) &&\text{(By Eq.~\eqref{eq:quantile_utility_function})}.
\end{align*}
Therefore, it suffices to prove that $\max_{s\in\S^{\dagger}} \sum_{t\in[T]} u^{(t)}_{i}(s)=\max_{\bm{\pi}\in\Delta([K+1])} \sum_{t\in[T]}q_i^{(t)}(\bm{\pi})$.

To this end, we let $s^*:=\argmax_{s\in\S^{\dagger}} \sum_{t\in[T]} u^{(t)}_{i}(s)$. Moreover, we define
\[
    v_j^*:=
    \begin{cases}
        0 & \textrm{if $s^*(v)>\frac{j-1}{K},\,\,\forall\,v\in[0,1]$}\\
        \sup \{v\in[0,1] \mid s(v)\le\tfrac{j-1}{K}\} & \textrm{if otherwise},
    \end{cases}
    ,\,\,\forall\,j\in[K+1].
\]
Intuitively, $v_j^*$ is the value threshold at which the monotone bidding strategy $s^*$ raises the bid from $\frac{j-1}{K}$ to $\frac{j}{K}$. For completeness, we let $v_0^*:=0$. We construct a quantile strategy $\bm{\pi}^*$ as follows,
\[
    \pi^*_k:=F_i(v_k^*)-F_i(v_{k-1}^*),\,\,\forall\,k\in[K+1],
\]
which implies that $\sum_{k=1}^j\pi^*_k=F_i(v_j^*)$ for all $j\in[K+1]$. Then, it follows by Eq.~\eqref{eq:s_i_pi} that
\begin{equation}\label{eq:s_i_pi_star}
    s_{i,\bm{\pi}^*}(v)=
    \begin{cases}
    0 & \textrm{if } v\in[0,v_1^*]  \\
    \frac{j-1}{K} & \textrm{if } v\in(v_{j-1}^*,v_j^*] \textrm{ for } j\in\{2,\dots,K+1\}.
    \end{cases}
\end{equation}

Now we show that $u^{(t)}_{i}(s^*)=q_i^{(t)}(\bm{\pi}^*)$ for all $t\in[T]$. Intuitively, this holds because the monotone bidding strategy $s_{i,\bm{\pi}^*}(v)$ coincides with $s^*$, except possibly at the value thresholds $v_j^*$. Formally, we derive that
\begin{align*}
    u^{(t)}_{i}(s^*)&=\E_{v_i\sim\D_i}\big[x_i^{(t)}\big(s^*(v_i),\b_{-i}^{(t)}\big)\cdot v_i-p^{(t)}_i\big(s^*(v_i),\b_{-i}^{(t)}\big)\big] &&\text{(By Eq.~\eqref{eq:utility_function})}\\
    &=\sum_{j\in[K+1]}\int_{v_{j-1}^*}^{v_j^*} x_i^{(t)}\big(\tfrac{j-1}{K},\b_{-i}^{(t)}\big)\cdot v_i-p^{(t)}_i\big(\tfrac{j-1}{K},\b_{-i}^{(t)}\big)dF_i(v_i) &&\text{(By definition of $v_j^*$'s)}\\
    &=\E_{v_i\sim\D_i}\big[x_i^{(t)}\big(s_{i,\bm{\pi}^*}(v_i),\b_{-i}^{(t)}\big)\cdot v_i-p^{(t)}_i\big(s_{i,\bm{\pi}^*}(v_i),\b_{-i}^{(t)}\big)\big] &&\text{(By Eq.~\eqref{eq:s_i_pi_star})}\\
    &=u^{(t)}_{i}(s_{i,\bm{\pi}^*})&&\text{(By Eq.~\eqref{eq:utility_function})}\\
    &=q_i^{(t)}(\bm{\pi}^*)&&\text{(By Eq.~\eqref{eq:quantile_utility_function})}.
\end{align*}
Hence, we have that
\[
    \max_{s\in\S^{\dagger}} \sum_{t\in[T]} u^{(t)}_{i}(s)=\sum_{t\in[T]} u^{(t)}_{i}(s^*)=\sum_{t\in[T]}q_i^{(t)}(\bm{\pi}^*)\le\max_{\bm{\pi}\in\Delta([K+1])} \sum_{t\in[T]}q_i^{(t)}(\bm{\pi}).
\]
Moreover, we notice that by Eq.~\eqref{eq:quantile_utility_function} and the fact that $s_{i,\bm{\pi}}\in\S^{\dagger}$,
\[
    \max_{\bm{\pi}\in\Delta([K+1])} \sum_{t\in[T]}q_i^{(t)}(\bm{\pi})=\max_{\bm{\pi}\in\Delta([K+1])} \sum_{t\in[T]}u^{(t)}_{i}(s_{i,\bm{\pi}})\le\max_{s\in\S^{\dagger}} \sum_{t\in[T]} u^{(t)}_{i}(s).
\]
It follows that $\max_{s\in\S^{\dagger}} \sum_{t\in[T]} u^{(t)}_{i}(s)=\max_{\bm{\pi}\in\Delta([K+1])} \sum_{t\in[T]}q_i^{(t)}(\bm{\pi})$, which finishes the proof.
\end{proof}

\subsection{Proof of Lemma~\ref{lem:quantile_utility_function_is_concave}}\label{section:proof_of_lemma_quantile_utility_function_is_concave}
\begin{proof}[Proof of Lemma~\ref{lem:quantile_utility_function_is_concave}]
We first expand the function $q_i^{(t)}(\bm{\pi})$ as follows,
\begin{align}\label{eq:expand_quantile_utility_function}
    q_i^{(t)}(\bm{\pi})&=u^{(t)}_{i}(s_{i,\bm{\pi}}) &&\text{(By Eq.~\eqref{eq:quantile_utility_function})}\nonumber\\
    &=\E_{v_i\sim\D_i}\big[x_i^{(t)}\big(s_{i,\bm{\pi}}(v_i),\b_{-i}^{(t)}\big)\cdot v_i-p^{(t)}_i\big(s_{i,\bm{\pi}}(v_i),\b_{-i}^{(t)}\big)\big] &&\text{(By Eq.~\eqref{eq:utility_function})}\nonumber\\
    &=\sum_{j\in[K]}\int_{F_i^{-1}(\sum_{\ell=1}^{j-1}\pi_{\ell})}^{F_i^{-1}(\sum_{\ell=1}^{j}\pi_{\ell})} x_i^{(t)}\big(\tfrac{j-1}{K},\b_{-i}^{(t)}\big)\cdot v_i-p^{(t)}_i\big(\tfrac{j-1}{K},\b_{-i}^{(t)}\big)dF_i(v_i) \nonumber\\
    &\quad+\int_{F_i^{-1}(\sum_{\ell=1}^{K}\pi_{\ell})}^{1} x_i^{(t)}\big(1,\b_{-i}^{(t)}\big)\cdot v_i-p^{(t)}_i\big(1,\b_{-i}^{(t)}\big)dF_i(v_i) &&\text{(By Eq.~\eqref{eq:s_i_pi})}\nonumber\\
    &=\sum_{j\in[K]}\int_{\sum_{\ell=1}^{j-1}\pi_{\ell}}^{\sum_{\ell=1}^{j}\pi_{\ell}} x_i^{(t)}\big(\tfrac{j-1}{K},\b_{-i}^{(t)}\big)\cdot F_i^{-1}(z)-p^{(t)}_i\big(\tfrac{j-1}{K},\b_{-i}^{(t)}\big)dz\nonumber\\
    &\quad+\int_{\sum_{\ell=1}^{K}\pi_{\ell}}^{1} x_i^{(t)}\big(1,\b_{-i}^{(t)}\big)\cdot F_i^{-1}(z)-p^{(t)}_i\big(1,\b_{-i}^{(t)}\big)dz.
\end{align}

Then, we define the partial-sum functions $\tau_j(\bm{\pi}):=\sum_{\ell=1}^j\pi_{\ell}$ for all $j\in[K]$, and we define an intermediate function $\sigma_i^{(t)}:[0,1]^{K}\to[-1,1]$ as follows\footnote{We note that $\sigma_i^{(t)}$ is essentially a generalization of bidders' utility functions in~\citet[Eq.~(3)]{KSS24} from first-price auctions to arbitrary auctions.},
\[
    \sigma_i^{(t)}(y_1,\dots,y_K):=\sum_{j\in[K+1]}\int_{y_{j-1}}^{y_j}x_i^{(t)}\big(\tfrac{j-1}{K},\b_{-i}^{(t)}\big)\cdot F_i^{-1}(z)-p^{(t)}_i\big(\tfrac{j-1}{K},\b_{-i}^{(t)}\big)dz,
\]
where we let $y_0:=0$ and $y_{K+1}:=1$. We notice that $q_i^{(t)}(\bm{\pi})=\sigma_i^{(t)}(\tau_1(\bm{\pi}),\dots,\tau_K(\bm{\pi}))$ by Eq.~\eqref{eq:expand_quantile_utility_function}, and moreover, for each $j\in[K]$, $\tau_j$ is a linear function of $\bm{\pi}$.

\subsubsection*{Concavity of $q_i^{(t)}$}
Since linear transformations preserve concavity~\citep[Section 3.2]{BV04}, to show that $q_i^{(t)}$ is concave, it suffices to prove that $\sigma_i^{(t)}$ is concave. To this end, we observe that the partial derivatives of $\sigma_i^{(t)}$ with respect to $y_j$, for all $j\in[K]$, are given by
\begin{align}\label{eq:sigma_partial_gradient}
\frac{\partial \sigma_i^{(t)}}{\partial y_j}&=\big(x_i^{(t)}\big(\tfrac{j-1}{K},\b_{-i}^{(t)}\big)-x_i^{(t)}\big(\tfrac{j}{K},\b_{-i}^{(t)}\big)\big)\cdot F_i^{-1}(y_j)-\big(p^{(t)}_i\big(\tfrac{j-1}{K},\b_{-i}^{(t)}\big)-p^{(t)}_i\big(\tfrac{j}{K},\b_{-i}^{(t)}\big)\big),\,\,\forall\,j\in[K].
\end{align}
We notice that for all $j\in[K]$, the partial derivative $\frac{\partial \sigma_i^{(t)}}{\partial y_j}$ depends only on $y_j$. Thus, to prove that $\sigma_i^{(t)}$ is concave, it suffices to show that for all $j\in[K]$, $\frac{\partial \sigma_i^{(t)}}{\partial y_j}$ is a non-increasing function of $y_j$. We observe that $F_i^{-1}(y_j)$ is non-decreasing in $y_j$ since $F_i$ is a CDF, and that $x_i^{(t)}\big(\tfrac{j-1}{K},\b_{-i}^{(t)}\big)-x_i^{(t)}\big(\tfrac{j}{K},\b_{-i}^{(t)}\big)$ is non-positive because $x_i^{(t)}$ is non-decreasing by Assumption~\ref{assumption:auction_format}-\ref{assumption:allocation_monotonicity}. It follows that $\frac{\partial \sigma_i^{(t)}}{\partial y_j}$ is a non-increasing function of $y_j$ for all $j\in[K]$, which implies that $\sigma_i^{(t)}$, and hence $q_i^{(t)}$, is concave.

\subsubsection*{Gradient of $q_i^{(t)}$}
Next, we note that $\frac{\partial q_i^{(t)}}{\partial \pi_{K+1}}=0$ since $q_i^{(t)}$ is independent of $\pi_{K+1}$. Moreover, since $q_i^{(t)}(\bm{\pi})=\sigma_i^{(t)}(\tau_1(\bm{\pi}),\dots,\tau_K(\bm{\pi}))$, for each $k\in[K]$, we can compute the partial derivative of $q_i^{(t)}$ with respect to $\pi_k$ as follows,
\begin{align*}
\frac{\partial q_i^{(t)}}{\partial \pi_k}&=\sum_{j=1}^{K}\frac{\partial \sigma_i^{(t)}}{\partial \tau_j}\cdot\frac{\partial \tau_j}{\partial \pi_k}=\sum_{j=k}^{K}\frac{\partial \sigma_i^{(t)}}{\partial \tau_j}\\
&=\sum_{j=k}^{K}\big(x_i^{(t)}\big(\tfrac{j-1}{K},\b_{-i}^{(t)}\big)-x_i^{(t)}\big(\tfrac{j}{K},\b_{-i}^{(t)}\big)\big)\cdot F_i^{-1}(\tau_j(\bm{\pi}))-\big(p^{(t)}_i\big(\tfrac{j-1}{K},\b_{-i}^{(t)}\big)-p^{(t)}_i\big(\tfrac{j}{K},\b_{-i}^{(t)}\big)\big)\\
&\qquad\text{(By Eq.~\eqref{eq:sigma_partial_gradient})}\\
&=\bigg(\sum_{j=k}^{K}\big(x_i^{(t)}\big(\tfrac{j-1}{K},\b_{-i}^{(t)}\big)-x_i^{(t)}\big(\tfrac{j}{K},\b_{-i}^{(t)}\big)\big)\cdot F_i^{-1}(\tau_j(\bm{\pi}))\bigg)+p_i^{(t)}\big(1,\b_{-i}^{(t)}\big)-p^{(t)}_i\big(\tfrac{k-1}{K},\b_{-i}^{(t)}\big)\\
&\qquad\text{(By a telescoping sum)},
\end{align*}
which establishes Eq.~\eqref{eq:gradient_of_quantile_utility_function}. To bound $\frac{\partial q_i^{(t)}}{\partial \pi_k}$ for any $k\in[K]$, we notice that each term $x_i^{(t)}\big(\tfrac{j-1}{K},\b_{-i}^{(t)}\big)-x_i^{(t)}\big(\tfrac{j}{K},\b_{-i}^{(t)}\big)$ is non-positive (by Assumption~\ref{assumption:auction_format}-\ref{assumption:allocation_monotonicity}) and $F_i^{-1}(\tau_j(\bm{\pi}))\in[0,1]$. Hence, we have that
\begin{align*}
    0\ge\sum_{j=k}^{K}\big(x_i^{(t)}\big(\tfrac{j-1}{K},\b_{-i}^{(t)}\big)-x_i^{(t)}\big(\tfrac{j}{K},\b_{-i}^{(t)}\big)\big)\cdot F_i^{-1}(\tau_j(\bm{\pi}))&\ge \sum_{j=k}^{K} x_i^{(t)}\big(\tfrac{j-1}{K},\b_{-i}^{(t)}\big)-x_i^{(t)}\big(\tfrac{j}{K},\b_{-i}^{(t)}\big)\\
    &=x_i^{(t)}\big(\tfrac{k-1}{K},\b_{-i}^{(t)}\big)-x_i^{(t)}\big(1,\b_{-i}^{(t)}\big).
\end{align*}
Combining this with Eq.~\eqref{eq:gradient_of_quantile_utility_function}, we obtain that
\begin{align*}
    \frac{\partial q_i^{(t)}}{\partial \pi_k}&\le p_i^{(t)}\big(1,\b_{-i}^{(t)}\big)-p^{(t)}_i\big(\tfrac{k-1}{K},\b_{-i}^{(t)}\big) \le 1,\\
    \frac{\partial q_i^{(t)}}{\partial \pi_k}&\ge x_i^{(t)}\big(\tfrac{k-1}{K},\b_{-i}^{(t)}\big)-x_i^{(t)}\big(1,\b_{-i}^{(t)}\big)+p_i^{(t)}\big(1,\b_{-i}^{(t)}\big)-p^{(t)}_i\big(\tfrac{k-1}{K},\b_{-i}^{(t)}\big)\\
    &\ge -2,
\end{align*}
which completes the proof of the lemma.
\end{proof}

\subsection{Proof of Claim~\ref{claim:auxiliary_auction_payment}}\label{section:proof_of_claim_auxiliary_auction_payment}
\begin{proof}[Proof of Claim~\ref{claim:auxiliary_auction_payment}]
Since bidders' value vector $\v^{(t)}$ at round $t$ is sampled from $\D$ independently of $\tilde{\M}^{(t)}$, we can write $\Rev_{\tilde{\M}^{(t)}}(\D)=\E_{\v^{(t)}\sim\D}\big[\sum_{i\in[n]}\tilde{p}_i^{(t)}(\v^{(t)})\big]$. We express $\tilde{p}_i^{(t)}(\v^{(t)})$ as follows,
\begin{align}\label{eq:auxiliary_auction_payment_expression}
\tilde{p}_i^{(t)}(\v^{(t)})&=\tilde{x}_i^{(t)}\big(v_i^{(t)},\v_{-i}^{(t)}\big)v_i^{(t)} - \int_{0}^{v_i^{(t)}} \tilde{x}_i^{(t)}\big(z,\v_{-i}^{(t)}\big)dz &&\text{(By Eq.~\eqref{eq:auxiliary_auction_payment})}\nonumber\\
&=x_i^{(t)}\big(s_1^{(t)}(v_1^{(t)}),\dots,s_n^{(t)}(v_n^{(t)})\big)v_i^{(t)}\nonumber\\
&\quad-\int_{0}^{v_i^{(t)}} x_i^{(t)}\big(s_1^{(t)}(v_1^{(t)}),\dots,s_i^{(t)}(z),\dots,s_n^{(t)}(v_n^{(t)})\big)dz &&\text{(By Eq.~\eqref{eq:auxiliary_auction_allocation})}\nonumber\\
&=x_i^{(t)}\big(s_i^{(t)}(v_i^{(t)}),\b_{-i}^{(t)}\big)v_i^{(t)}-\int_{0}^{v_i^{(t)}} x_i^{(t)}\big(s_i^{(t)}(z),\b_{-i}^{(t)}\big)dz.
\end{align}
Recall that $s_i^{(t)}=s_{i,\bm{\pi}^{(i,t)}}$ in Algorithm~\ref{alg:meta}. For compactness, let $\Pi_0^{(i,t)}:=0$ and $\Pi_k^{(i,t)}:=\sum_{\ell=1}^k\pi_{\ell}^{(i,t)}$ for all $k\in[K+1]$. By Eq.~\eqref{eq:s_i_pi}, for any $v\in\big(F_i^{-1}(\Pi_{k-1}^{(i,t)}),F_i^{-1}(\Pi_k^{(i,t)})\big]$ with $k\in[K+1]$, it holds that $s_i^{(t)}(v)=s_{i,\bm{\pi}^{(i,t)}}(v)=\frac{k-1}{K}$. Hence, by taking the expectation of both sides of Eq.~\eqref{eq:auxiliary_auction_payment_expression} over $v_i^{(t)}$ conditioned on $\v_{-i}^{(t)}$ (which determines $\b_{-i}^{(t)}$), we obtain that
\begin{align}\label{eq:auxiliary_auction_expected_payment}
    &\E_{v_i^{(t)}\sim\D_i}\big[\tilde{p}_i^{(t)}(\v^{(t)})\mid \v_{-i}^{(t)}\big]\nonumber\\
    =&\sum_{k\in[K+1]}\int_{F_i^{-1}(\Pi_{k-1}^{(i,t)})}^{F_i^{-1}(\Pi_k^{(i,t)})}\Big(x_i^{(t)}\Big(\frac{k-1}{K},\b_{-i}^{(t)}\Big)v_i^{(t)}-\int_{0}^{v_i^{(t)}} x_i^{(t)}\big(s_i^{(t)}(z),\b_{-i}^{(t)}\big)dz\Big)\,dF_i(v_i^{(t)}).
\end{align}
Similarly, for any $v_i^{(t)}\in\big(F_i^{-1}(\Pi_{k-1}^{(i,t)}),F_i^{-1}(\Pi_k^{(i,t)})\big]$, we can decompose $\int_{0}^{v_i^{(t)}} x_i^{(t)}\big(s_i^{(t)}(z),\b_{-i}^{(t)}\big)dz$ as
\begin{align*}
    &\,\int_{0}^{v_i^{(t)}} x_i^{(t)}\big(s_i^{(t)}(z),\b_{-i}^{(t)}\big)dz\\
    =&\sum_{j\in[k-1]}\int_{F_i^{-1}(\Pi_{j-1}^{(i,t)})}^{F_i^{-1}(\Pi_{j}^{(i,t)})}x_i^{(t)}\Big(\frac{j-1}{K},\b_{-i}^{(t)}\Big)dz+\int_{F_i^{-1}(\Pi_{k-1}^{(i,t)})}^{v_i^{(t)}}x_i^{(t)}\Big(\frac{k-1}{K},\b_{-i}^{(t)}\Big)dz\\
    =&\sum_{j\in[k-1]}\big(F_i^{-1}(\Pi_{j}^{(i,t)})-F_i^{-1}(\Pi_{j-1}^{(i,t)})\big)\cdot x_i^{(t)}\Big(\frac{j-1}{K},\b_{-i}^{(t)}\Big)+\big(v_i^{(t)}-F_i^{-1}(\Pi_{k-1}^{(i,t)})\big)\cdot x_i^{(t)}\Big(\frac{k-1}{K},\b_{-i}^{(t)}\Big)\\
    =&\sum_{j\in[k-1]}\Big(x_i^{(t)}\Big(\frac{j-1}{K},\b_{-i}^{(t)}\Big)-x_i^{(t)}\Big(\frac{j}{K},\b_{-i}^{(t)}\Big)\Big)\cdot F_i^{-1}(\Pi_{j}^{(i,t)})+x_i^{(t)}\Big(\frac{k-1}{K},\b_{-i}^{(t)}\Big)v_i^{(t)}\\
    &\quad\text{(By $F_i^{-1}(\Pi_{0}^{(i,t)})=F_i^{-1}(0)=0$ and rearranging)}.
\end{align*}
By substituting this into Eq.~\eqref{eq:auxiliary_auction_expected_payment}, we derive that
\begin{align*}
    &\E_{v_i^{(t)}\sim\D_i}\big[\tilde{p}_i^{(t)}(\v^{(t)})\mid \v_{-i}^{(t)}\big]\\
    =&\sum_{k\in[K+1]}\int_{F_i^{-1}(\Pi_{k-1}^{(i,t)})}^{F_i^{-1}(\Pi_k^{(i,t)})}\sum_{j\in[k-1]}\Big(x_i^{(t)}\Big(\frac{j}{K},\b_{-i}^{(t)}\Big)-x_i^{(t)}\Big(\frac{j-1}{K},\b_{-i}^{(t)}\Big)\Big)\cdot F_i^{-1}(\Pi_{j}^{(i,t)})dF_i(v_i^{(t)})\\
    =&\sum_{k\in[K+1]}\int_{\Pi_{k-1}^{(i,t)}}^{\Pi_k^{(i,t)}}\sum_{j\in[k-1]}\Big(x_i^{(t)}\Big(\frac{j}{K},\b_{-i}^{(t)}\Big)-x_i^{(t)}\Big(\frac{j-1}{K},\b_{-i}^{(t)}\Big)\Big)\cdot F_i^{-1}(\Pi_{j}^{(i,t)})dz\\
    =&\sum_{k\in[K+1]}\pi_k^{(i,t)}\cdot\sum_{j\in[k-1]}\Big(x_i^{(t)}\Big(\frac{j}{K},\b_{-i}^{(t)}\Big)-x_i^{(t)}\Big(\frac{j-1}{K},\b_{-i}^{(t)}\Big)\Big)\cdot F_i^{-1}(\Pi_{j}^{(i,t)})\\
    =&\sum_{j=1}^K\sum_{k=j+1}^{K+1}\pi_k^{(i,t)}\cdot\Big(x_i^{(t)}\Big(\frac{j}{K},\b_{-i}^{(t)}\Big)-x_i^{(t)}\Big(\frac{j-1}{K},\b_{-i}^{(t)}\Big)\Big)\cdot F_i^{-1}(\Pi_{j}^{(i,t)})\qquad\,\,\,\text{(By exchanging the sums)}.
\end{align*}
By taking the expectation of both sides over $\v_{-i}^{(t)}$ (which determines $\b_{-i}^{(t)}$), we have that
\[
    \E_{\v^{(t)}\sim\D}\big[\tilde{p}_i^{(t)}(\v^{(t)})\big]=\E\!_{\v_{-i}^{(t)}}\Big[\sum_{j=1}^K\sum_{k=j+1}^{K+1}\pi_k^{(i,t)}\cdot \Big(x_i^{(t)}\Big(\frac{j}{K},\b_{-i}^{(t)}\Big)-x_i^{(t)}\Big(\frac{j-1}{K},\b_{-i}^{(t)}\Big)\Big)\cdot F_i^{-1}(\Pi_{j}^{(i,t)})\Big],
\]
which implies the claim because $\Rev_{\tilde{\M}^{(t)}}(\D)=\sum_{i\in[n]}\E_{\v^{(t)}\sim\D}\big[\tilde{p}_i^{(t)}(\v^{(t)})\big]$.
\end{proof}

\subsection{Proof of Proposition~\ref{prop:lower_bound}}\label{section:proof_of_prop_lower_bound}
\begin{proof}[Proof of Proposition~\ref{prop:lower_bound}]
Suppose that there is only one bidder in the repeated auction game, and the bidder's value distribution $\D_1$ is the uniform distribution over $\big[1-\frac{1}{T},1\big]$\footnote{The proof applies to any continuous distribution with sufficiently large probability mass on $\big[1-\frac{1}{T},1\big]$.}. Given any algorithm $\A$ that guarantees regret $\Reg_{\A}$ for the bidder in the repeated auction game, we show how to convert $\A$ into an algorithm $\A'$ that guarantees regret $\Reg_{\A}+O(1)$ in the standard online learning setting (Definition~\ref{def:no_regret_learning_algorithm}), but with reward vectors in $\{0,1\}^K$. This will establish the proposition, since the worst-case regret of any algorithm in the standard online learning setting (with reward vectors in $\{0,1\}^K$) is at least $\Omega(\sqrt{T\log(K)})$~\citep[Theorem 3.7]{CG06}.

The algorithm $\A'$ operates as follows: At each round $t\in[T]$, it first invokes algorithm $\A$, which returns a bidding strategy $s^{(t)}$. Let $\bm{\pi}^{(t)}\in\Delta([K+1])$ be the distribution such that
\[
    \pi_i^{(t)}:=\Pr_{v\sim\D_1}\big[s^{(t)}(v)=\tfrac{i-1}{K}\big],\,\,\forall\,i\in[K+1].
\]
That is, $\pi_i^{(t)}$ is the probability mass of all value types that bid $\frac{i-1}{K}$ under the bidding strategy $s^{(t)}$. Algorithm $\A'$ then chooses each action $i\in\{2,\dots,K\}$ with probability $\pi_{i+1}^{(t)}$, and chooses action $1$ with probability $\pi_1^{(t)}+\pi_2^{(t)}$, after which a reward vector $\r^{(t)}\in\{0,1\}^K$ is revealed. Algorithm $\A'$, in turn, reveals to $\A$ an auction format $\M^{(t)}$ that allocates the item to the bidder with probability $1$ if her bid is non-zero and with probability $0$ otherwise. The payment function $p^{(t)}:B\to[0,1]$ of $\M^{(t)}$ is given by $p^{(t)}(0)=0$ and
\[
    p^{(t)}\big(\tfrac{i}{K}\big):=1-r^{(t)}_i,\,\,\forall\,i\in[K].
\]

We observe that the regret of algorithm $\A'$ (in the standard online setting) is
\begin{equation}\label{eq:lower_bound_proof_regret_A'}
    \Reg_{\A'}=\sum_{t\in[T]}\Big(r_{i^*}^{(t)}-\sum_{i\in[K]}\pi_{i+1}^{(t)}r_i^{(t)}-\pi_1^{(t)}r_1^{(t)}\Big)\le\sum_{t\in[T]}\Big(r_{i^*}^{(t)}-\sum_{i\in[K]}\pi_{i+1}^{(t)}r_i^{(t)}\Big),
\end{equation}
where $i^*:=\argmax_{i\in[K]}\sum_{t\in[T]}r_{i}^{(t)}$. On the other hand, for each $t\in[T]$, let $v^{(t)}$ denote the bidder's value at round $t$, and notice that the bidder's utility obtained by bidding $0$ is zero, and for each $i\in[K]$, the bidder's utility achieved by bidding $\frac{i}{K}$ at round $t$ is
\begin{equation}\label{eq:lower_bound_proof_utility_round_t}
    v^{(t)}-p^{(t)}\big(\tfrac{i}{K}\big)=r_{i}^{(t)}-O\big(\tfrac{1}{T}\big),
\end{equation}
since $v^{(t)}\in[1-\frac{1}{T},1]$ and $p^{(t)}\big(\frac{i}{K}\big)=1-r^{(t)}_i$. Hence, the bidder's cumulative utility by bidding $\frac{i^*}{K}$ over all $T$ rounds is $\sum_{t\in[T]}\big(r_{i^*}^{(t)}-O\big(\frac{1}{T}\big)\big)$. Since the best bidding strategy in hindsight is no worse than always bidding $\frac{i^*}{K}$, the regret of the algorithm $\A$ (in the repeated auction game) is at least
\begin{align*}
    \Reg_{\A}&\ge\sum_{t\in[T]}\Big(r_{i^*}^{(t)}-O\big(\tfrac{1}{T}\big)-\sum_{i\in[K]}\pi_{i+1}^{(t)}\cdot\big(v^{(t)}-p^{(t)}\big(\tfrac{i}{K}\big)\big)\Big)\\
    &=\sum_{t\in[T]}\Big(r_{i^*}^{(t)}-O\big(\tfrac{1}{T}\big)-\sum_{i\in[K]}\pi_{i+1}^{(t)}\cdot\big(r_i^{(t)}-O\big(\tfrac{1}{T}\big)\big)\Big) &&\text{(By Eq.~\eqref{eq:lower_bound_proof_utility_round_t})}\\
    &\ge\Reg_{\A'}-O(1) &&\text{(By Eq.~\eqref{eq:lower_bound_proof_regret_A'})},
\end{align*}
which implies that $\Reg_{\A'}\le\Reg_{\A}+O(1)$.
\end{proof}

\begin{algorithm}[ht]
\SetAlgoLined
\SetKwInOut{Input}{Input}
\SetKwInOut{Output}{Output}
\Input{Step size $\eta>0$}
\SetAlgorithmName{Algorithm}~~
    $\bm{\pi}^{(1)}\gets\big(\frac{1}{K+1},\dots,\frac{1}{K+1}\big)$\;
    \For{$t=1,\dots,T$}{
        Play an action $i\in[K+1]$ sampled from the distribution $\bm{\pi}^{(t)}$\;
        Observe the reward vector $\r^{(t)}$\;
        $\bm{\pi}^{(t+1)}\gets\argmin_{\bm{\pi}\in\Delta([K+1])}||\bm{\pi}-(\bm{\pi}^{(t)}+\eta\cdot\r^{(t)})||_2$\;
    }
    \caption{\textsc{Agile Online Gradient Ascent}}
    \label{alg:agile_OGD}
\end{algorithm}
\section{A high-swap-regret instance for agile OGD}\label{section:swap_regret}
We note that there is a growing body of work on stronger notions of regret that agile OGD minimizes (e.g.,~\citet{ahunbay24,AB25,CDLWZ25}), and it is known that agile OGD does not minimize swap regret (see e.g.,~\citet{AB25}). However, we were unable to find in the literature an explicitly written example showing that agile OGD with a fixed step size can incur $\Omega(T)$ swap regret. For completeness, we formalize one such example here and hope that it may be useful for future research. We first define swap regret and present agile OGD (Algorithm~\ref{alg:agile_OGD})—more precisely, agile online gradient ascent, since we work with reward vectors rather than loss vectors—within the standard online learning setting (Definition~\ref{def:no_regret_learning_algorithm}).
\begin{definition}
Following the setup in Definition~\ref{def:no_regret_learning_algorithm}, algorithm $\A$'s swap regret is
\[
    \SwapReg_{\A}:=\max_{\phi:[K+1]\to[K+1]} \sum_{t\in[T]} \Big(\sum_{k\in[K+1]}\pi_k^{(t)}r_{\phi(k)}^{(t)} - \pi_k^{(t)}r_k^{(t)}\Big).
\]
\end{definition}

Now we describe the instance and show that Algorithm~\ref{alg:agile_OGD} incurs high swap regret on it.

\begin{example}\label{ex:swap_regret_instance}
Assume w.l.o.g.~that $\frac{1}{2\eta}$ and $\frac{\eta}{300}\cdot T$ are both integers. Let $\alpha:=\frac{100}{\eta}$ and $\beta:=\frac{\eta}{300}\cdot T$. Suppose that $K=2$, and therefore the action set $[K+1]$ contains three actions. The reward vectors arrive in $\beta$ identical batches. Each batch $i\in[\beta]$ has three phases, each containing $\alpha$ reward vectors, which are defined in Table~\ref{table:batch}.

\begin{table}[ht]
\begin{center}
\begin{tabular}{@{}l | c c c}
 &
 $r_1^{(t)}$ &
 $r_2^{(t)}$ & 
 $r_3^{(t)}$ \\
\hline
 Phase 1: $t\in\{(i-1)\cdot3\alpha+1,\dots,(i-1)\cdot3\alpha+\alpha\}$ &
  1 & 1 & 0\\
\hline
 Phase 2: $t\in\{(i-1)\cdot3\alpha+\alpha+1,\dots,(i-1)\cdot3\alpha+2\alpha\}$ &
  0 & 1 & 0\\
\hline
 Phase 3: $t\in\{(i-1)\cdot3\alpha+2\alpha+1,\dots,i\cdot3\alpha\}$ &
  0 & 0 & 1
\end{tabular}
\caption{Reward vectors in the $i$-th batch for each $i\in[\beta]$}\label{table:batch}
\end{center}
\end{table}
\end{example}

\begin{proposition}\label{prop:swap_regret}
Algorithm~\ref{alg:agile_OGD} incurs $\Omega(T)$ swap regret on Example~\ref{ex:swap_regret_instance}.
\end{proposition}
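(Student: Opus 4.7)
The plan is to trace the trajectory of Algorithm~\ref{alg:agile_OGD} through a single batch and then exhibit a swap $\phi$ that extracts $\Omega(1/\eta)$ regret per batch.

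The first step is a closed-form computation of the iterates. By induction over batches, I would assume that the iterate at the start of a batch $i\ge 2$ equals $(0,0,1)$ and show, by a second induction on the rounds within the batch, that the iterate evolves as $(k\eta/3,k\eta/3,1-2k\eta/3)$ for $k=0,\dots,3/(2\eta)$ during Phase~1, then stays at $(1/2,1/2,0)$ for the remainder of Phase~1; next it evolves as $(1/2-k\eta/2,1/2+k\eta/2,0)$ for $k=0,\dots,1/\eta$ during Phase~2 and then stays at $(0,1,0)$; and finally it evolves as $(0,1-k\eta/2,k\eta/2)$ for $k=0,\dots,2/\eta$ during Phase~3 and stays at $(0,0,1)$. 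Each inductive step reduces to projecting $\bm{\pi}^{(t)}+\eta\r^{(t)}$ onto $\Delta([3])$; using the KKT characterisation $x_i=\max\{y_i-\tau,0\}$ with $\sum_i x_i=1$, I would verify that the claimed iterate is the projection by exhibiting the correct active face and the correct uniform shift $\tau$. The integrality assumption $1/(2\eta)\in\Z$ and the size $\alpha=100/\eta\ge 3/(2\eta)$ ensure that each transient lands exactly on a vertex and fits inside its phase. The first batch starts from $(1/3,1/3,1/3)$ and has a slightly shorter Phase~1 transient, but it still stabilises at $(1/2,1/2,0)$ in time, so its Phase~2 and Phase~3 dynamics agree with later batches.

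The second step is to choose the swap $\phi$ with $\phi(1)=2$, $\phi(2)=2$, $\phi(3)=3$ and bound its value. Phase~1 contributes nothing because $r_1^{(t)}=r_2^{(t)}=1$ there, and Phase~3 contributes nothing because the trajectory shows $\pi_1^{(t)}=0$ throughout Phase~3. The entire contribution therefore comes from Phase~2, where $r_2^{(t)}-r_1^{(t)}=1$ and $\pi_1^{(t)}=1/2-k\eta/2$ for $k=0,\dots,1/\eta$ followed by zeros. Summing within each batch,
\[
\sum_{k=0}^{1/\eta}\big(1/2-k\eta/2\big)=\frac{1+1/\eta}{4},
\]
and multiplying by the number of batches $\beta=\eta T/300$ gives
\[
\SwapReg_{\A}\ge\beta\cdot\frac{1+1/\eta}{4}\ge\frac{T}{1200}=\Omega(T).
\]

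The main obstacle is the projection case analysis: at each step one must determine whether a coordinate of the pre-projection iterate becomes negative and must be clipped to zero, and then identify the correct active face and uniform shift. This is most delicate at the transitions between the transient and stable regimes within each phase (for instance when $\pi_3$ first hits zero in Phase~1 and the active face drops from the whole simplex to the edge $\{(x,1-x,0)\}$); once these boundary cases are handled, the swap-regret lower bound is an elementary arithmetic computation.
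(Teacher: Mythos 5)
Your proposal is correct and follows essentially the same route as the paper's proof: the same swap $\phi$ with $\phi(1)=\phi(2)=2$, $\phi(3)=3$, the same batch-by-batch trajectory tracking through $(0,0,1)\to(\tfrac12,\tfrac12,0)\to(0,1,0)\to(0,0,1)$, and the same Phase-2 per-batch contribution of $\Theta(1/\eta)$ multiplied by $\beta=\eta T/300$ batches. The only cosmetic difference is that you propose verifying the projections via the KKT water-filling characterization, whereas the paper minimizes directly over the symmetric active face, but these are interchangeable routine computations.
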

\begin{proof}
Consider the mapping $\phi:[3]\to[3]$ defined by $\phi(1)=2$, $\phi(2)=2$, and $\phi(3)=3$.
Then,
\[
\sum_{k=1}^3 \pi_k^{(t)}r_{\phi(k)}^{(t)}
-\sum_{k=1}^3 \pi_k^{(t)}r_k^{(t)}
= \pi_1^{(t)}\cdot\big(r_2^{(t)}-r_1^{(t)}\big),
\]
so the swap regret of Algorithm~\ref{alg:agile_OGD}, which we denote by $\SwapReg_{AOGD}$, is at least
\[
\sum_{t=1}^T \pi_1^{(t)}\cdot\big(r_2^{(t)}-r_1^{(t)}\big).
\]
From Table~\ref{table:batch}, only Phase~2 contributes to this sum, because
$r_2^{(t)}-r_1^{(t)}$ equals 1 in Phase~2 and equals $0$ in Phases~1 and~3. Hence, we have
\[
\SwapReg_{AOGD} \;\ge\; \sum_{t:\text{ Phase 2}} \pi_1^{(t)}.
\]

We analyze one batch and then argue that all batches behave identically.

\subsubsection*{Phase 1: $\r^{(t)}=(1,1,0)$}
Let $\bm{\pi}^{(t)}=\big(\pi_1^{(t)},\pi_2^{(t)},\pi_3^{(t)}\big)$ and suppose it satisfies
$\bm{\pi}^{(t)}=(x_t,x_t,1-2x_t)$ for some $x_t\in\big[0,\frac{1}{2}\big]$.
We update
\[
\bm{y}^{(t)}=\bm{\pi}^{(t)}+\eta(1,1,0)=(x_t+\eta,x_t+\eta,1-2x_t),
\]
and project $\bm{y}^{(t)}$ onto the simplex $\Delta([3])$ in $\ell_2$.
By symmetry, the projection has the form
\[
\bm{\pi}^{(t+1)} = (a,\, a,\, 1-2a)
\]
for some $a \in \big[0,\frac{1}{2}\big]$ that minimizes the squared distance
\[
\bigl\| (a,a,1-2a) - (x_t+\eta,\, x_t+\eta,\, 1-2x_t) \bigr\|_2^2
    = 2(a - x_t - \eta)^2 + 4(a - x_t)^2.
\]
Taking the derivative and setting it to zero gives
\[
12a - 12x_t - 4\eta = 0
\quad\Longrightarrow\quad
a = x_t + \frac{\eta}{3}.
\]
As long as the third component remains non-negative, i.e.,
\[
1 - 2a = 1 - 2x_t - \frac{2\eta}{3} \ge 0
    \quad\Longleftrightarrow\quad
    x_t \le \frac{1}{2} - \frac{\eta}{3},
\]
the projection update is
\[
\bm{\pi}^{(t+1)}
    = \bigl(x_t+\tfrac{\eta}{3},\, x_t+\tfrac{\eta}{3},\, 1 - 2x_t - \tfrac{2\eta}{3}\bigr).
\]

Starting from $\bm{\pi}^{(t)} = (0,0,1)$ at the beginning of Phase~1 of each batch 
(in the analysis of Phase 3, we will explain why every batch begins with this state, except the very first batch), we have $x_t=0$ and hence
\[
\pi_1^{(t+m)} = \pi_2^{(t+m)} = \frac{\eta}{3}\cdot m\,\, \textrm{ and }\,\,
\pi_3^{(t+m)} = 1 - \frac{2\eta}{3}\cdot m,\,\,\forall\, m=1,\dots,\frac{3}{2\eta},
\]
Thus, $\pi_3^{(t)}$ decreases linearly and hits $0$ after $\frac{3}{2\eta}$ steps.
At that point,
\[
\left(\pi_1^{\left(t+\frac{3}{2\eta}\right)},\pi_2^{\left(t+\frac{3}{2\eta}\right)},\pi_3^{\left(t+\frac{3}{2\eta}\right)}\right) = \Bigl(\frac12,\frac12,0\Bigr).
\]
Since $\alpha = \frac{100}{\eta} > \frac{3}{2\eta}$, Phase~1 contains more than enough rounds to reach
$(\tfrac12,\tfrac12,0)$. Observe that this point is a fixed point under the update rule of Algorithm~\ref{alg:agile_OGD} when $\r^{(t)} = (1,1,0)$. Hence the state remains $(\tfrac12,\tfrac12,0)$ for the remainder 
of Phase~1.

Analogously, for the very first batch, the same recurrence applied to the initial distribution 
$(\tfrac13,\tfrac13,\tfrac13)$ reaches $(\tfrac12,\tfrac12,0)$ within $\alpha=\frac{100}{\eta}$ rounds.

\subsubsection*{Phase 2: $\r^{(t)}=(0,1,0)$}
At the start of Phase~2 in any batch, the iterate is $\bm{\pi}^{(t)}=(\tfrac12,\tfrac12,0)$.
During this phase, the state stays on the edge $\{(\pi_1,\pi_2,0):\pi_1+\pi_2=1\}$.
Suppose that $\pi_1^{(t)}=x_t$, and hence, $\pi_2^{(t)}=1-x_t$. We update
\[
\bm{y}^{(t)} = \bm{\pi}^{(t)} + \eta(0,1,0) = (x_t,\,1-x_t+\eta,\,0),
\]
and project onto the edge, i.e., we minimize over $a$:
\[
(a-x_t)^2 + ((1-a)-(1-x_t+\eta))^2
= (a-x_t)^2 + (a-(x_t-\eta))^2.
\]
Taking derivative w.r.t.~$a$ and setting to zero, we obtain
\[
2(a-x_t) + 2(a-(x_t-\eta)) = 0
\quad\Longrightarrow\quad a = x_t - \frac{\eta}{2}.
\]
Thus, as long as $x_t\ge \frac{\eta}{2}$, the update is
\[
x_{t+1} = x_t - \frac{\eta}{2}.
\]
Starting from $x_t=\tfrac12$, we obtain for the first $\frac{1}{\eta}$ steps of Phase~2
\[
x_{t+m} = \frac12 - \frac{\eta}{2}\cdot m,\,\,\forall\, m=1,\dots,\frac{1}{\eta},
\]
and after $\frac{1}{\eta}<\alpha$ steps, $x_{t+\frac{1}{\eta}}$ decreases to $0$, and the state becomes $(0,1,0)$
and stays there for the remainder of Phase~2.

Hence, this batch contributes at least
\[
\sum_{m=0}^{1/\eta-1} x_{t+m}
= \sum_{m=0}^{1/\eta-1} \left(\frac12 - \frac{\eta}{2}\cdot m\right)
= \frac{1}{4\eta} + \frac{1}{4}
= \Omega\!\left(\frac{1}{\eta}\right)
\]
to $\sum_{t:\text{ Phase 2}} \pi_1^{(t)}$, and therefore also to the swap regret.

\subsubsection*{Phase 3: $\r^{(t)}=(0,0,1)$}
At the start of Phase~3, the iterate is $\bm{\pi}^{(t)}=(0,1,0)$.
During this phase, the state stays on the edge $\{(0,\pi_2,\pi_3):\pi_2+\pi_3=1\}$.
Suppose that $\pi_2^{(t)}=x_t$, and hence, $\pi_3^{(t)}=1-x_t$. We update
\[
\bm{y}^{(t)} = \bm{\pi}^{(t)} + \eta(0,0,1) = (0,\,x_t,\,1-x_t+\eta),
\]
and project onto the edge. Analogously to Phase 2, this gives
\[
x_{t+1} = x_t - \frac{\eta}{2}.
\]
Starting from $x_t = 1$, after $\frac{2}{\eta}<\alpha$ steps, the iterate reaches $(0,0,1)$. Hence, each Phase~3 ends at $(0,0,1)$. Therefore, every new batch starts from $(0,0,1)$, and the above analysis of Phases~1 and~2 repeats identically in each batch.

\subsubsection*{Total swap regret}
We have shown that each batch contributes at least $\Omega\big(\frac{1}{\eta}\big)$
to $\sum_{t:\text{ Phase 2}} \pi_1^{(t)}$, and hence also to the swap regret. 
Since there are $\beta=\frac{\eta}{300}\cdot T$ batches,
\[
\SwapReg_{AOGD}
\;\ge\;
\beta \cdot \Omega\left(\frac{1}{\eta}\right)
= \Omega(T).
\]
\end{proof}

\end{document}